\documentclass[11pt,a4paper,draft]{article}

\usepackage[top=3cm, bottom=4cm, left=3cm, right=3cm]{geometry}

\usepackage{amsmath,amssymb}
\usepackage{amsthm,amscd}
%\numberwithin{equation}{section}
%\usepackage[polish]{babel}
%\usepackage[latin2]{inputenc}
%\usepackage[T1]{fontenc}
\usepackage[active]{srcltx}
%\usepackage{relsize}
%\usepackage{showkeys}
%\usepackage{enumerate}
%\usepackage{cite}
%\usepackage{graphicx}
%\usepackage{psfrag}
%\usepackage[stable]{footmisc} % footnotes to section titles

%\usepackage{xcolor}

%\makeatletter
%\def\mathcolor#1#{\@mathcolor{#1}}
%\def\@mathcolor#1#2#3{%
%  \protect\leavevmode
%  \begingroup
%    \color#1{#2}#3%
%  \endgroup
%}
%\makeatother

\newcommand{\emi}{({\em i}\,) }
\newcommand{\emii}{({\em ii}\,) }
\newcommand{\emiii}{({\em iii}\,) }

\newcommand{\Z}{\mathbb{Z}}
\newcommand{\R}{\mathbb{R}}
\newcommand{\C}{\mathbb{C}}

\newcommand{\la}{\lambda}
\newcommand{\eps}{\epsilon}

\newcommand{\Qc}{{\cal Q}}
\newcommand{\Hc}{{\cal H}}
\newcommand{\Mc}{{\cal M}}
\newcommand{\Bc}{{\cal B}}
\newcommand{\Sc}{{\cal S}}

\newcommand{\Uc}{{\cal U}}
\newcommand{\Ic}{{\cal I}}

\newcommand{\gb}{\boldsymbol{g}}

\newcommand{\qb}{\boldsymbol{q}}
\newcommand{\scal}[2]{\langle #1| #2\rangle}
\newcommand{\ot}{\otimes}
\DeclareMathOperator{\id}{{\rm id}}
\DeclareMathOperator{\sgn}{{\rm sgn}}
\DeclareMathOperator{\Ad}{{\rm Ad}}

%\newcounter{mnotecount}[section]
%\renewcommand{\themnotecount}{\thesection.\arabic{mnotecount}}
%\newcommand{\mnote}[1]{\protect{\stepcounter{mnotecount}}$^{\mbox{\footnotesize  $\bullet$\themnotecount}}$ \marginpar{\raggedright\tiny$\!\!\!\!\!\!\,\bullet$\themnotecount: #1} }

\newtheorem{thr}{Theorem}
\newtheorem{df}[thr]{Definition}
\newtheorem{lm}[thr]{Lemma}
\newtheorem{prop}[thr]{Proposition}

\numberwithin{equation}{section}
\numberwithin{thr}{section}

%\newcounter{zadnr}
%\newenvironment{zad}{\stepcounter{zadnr}\noindent
%\hbox{{\bf Zadanie \number\value{zadnr}.}\quad}}{\bigskip}

%\newenvironment{zadtr}{\stepcounter{zadnr}\noindent
%\hbox{{\bf Zadanie \number\value{zadnr}*.}\quad}}{\bigskip}

%\renewcommand{\theequation}{\Alph{equation}}

\begin{document}
\title{Space of quantum states built over metrics of fixed signature}%pseudo-Riemannian metrics}%\footnote{This is an author-created copyedited version of an article accepted for publication in Journal of Mathematical Physics. The definitive publisher authenticated version is available online at http://dx.doi.org/???.}}
\author{Andrzej Oko{\l}\'ow }
\date{20 January 2020}

\maketitle
\begin{center}
{\it  Institute of Theoretical Physics, Warsaw University\\ ul. Pasteura 5, 02-093 Warsaw, Poland\smallskip\\
oko@fuw.edu.pl}
\end{center}
\medskip

\begin{abstract}
We construct a space of quantum states and an algebra of quantum observables, over the set of all metrics of arbitrary but fixed signature, defined on a manifold. The construction is diffeomorphism invariant, and unique up to natural isomorphisms. %In the case of Riemannian metrics defined on a three-dimensional manifold the corresponding quantum states and observables may possibly be used for quantization of the ADM formulation of general relativity.      
\end{abstract}

%***************************************************
\section{Introduction}
%***************************************************

In the late 70's of the last century, Jerzy Kijowski proposed a construction method of quantum states for field theories, based on projective techniques, and applied it to a scalar field theory \cite{kpt}. In recent years, search for new models of quantum gravity, led to a considerable development of the method \cite{q-nonl,proj-lt-I,proj-lt-II,proj-lt-III}, which allowed to apply the method to a canonical formulation of the teleparallel equivalent of general relativity (GR) \cite{oko-tegr-I,q-suit,ham-nv,q-stat}, and to GR described in terms of the real Ashtekar variables \cite{barb,proj-lqg-I}. Since the method was successfully applied to these two canonical formulations of GR, it is natural to ask if it can yield a space of quantum states for the ADM formulation of GR \cite{adm}, being perhaps the best known canonical formulation of this theory.      

To work well, the Kijowski's method requires a very special choice of degrees of freedom (d.o.f.) on the phase space of a field theory. So far we were not able to find suitable d.o.f. on the ADM phase space. However, in some cases \cite{non-comp} it is possible to employ a simplified version of the method, which uses only ``position'' d.o.f. on the phase space, discarding momentum ones. This simplified method can be easily applied to the ADM formulation of GR, and yields a space of quantum states.

The ``position'' or configuration variable on the ADM phase space is a Riemannian metric, defined on a three-dimensional manifold. Thus the space of quantum states for the ADM formulation of GR mentioned above, is built over the set of all Riemannian metrics on the manifold. It turns out, however, that the construction of this space makes no essential use of the specific signature of the metric, and the specific dimension of the manifold. Consequently, the simplified method provides a space of quantum states, related to metrics {\em of arbitrary but fixed signature, defined on any manifold}. In this paper we will present this general construction.

To construct the space of quantum states over metrics defined on a manifold, we will use d.o.f. labeled by points of the manifold. The construction will treat all the points equally, without distinguishing any of them. In this sense the resulting space of quantum states will be diffeomorphism invariant.

Given a manifold and a signature of metrics on the manifold, we will obtain actually a family of spaces of quantum states. However, for each two members of the family, there will exist a natural isomorphism, which will map one member to the other. This fact will allow us to state that, given a manifold and a signature, the resulting space of quantum states is unique up to natural  isomorphisms.  

Regarding possible applications of the spaces of quantum states provided by the general construction: perhaps each such space can be used to define a sort of quantum (pseudo-)\\ Riemannian geometry, akin to the quantum Riemannian geometry known from Loop Quantum Gravity (see e.g. \cite{rev}). An interesting question is if the space of quantum states built over Riemannian metrics on a three-dimensional manifold, or that constructed over Lorentzian metrics on a four-dimensional manifold, can be used for quantization of GR.

An essential element of the construction of the quantum states over metrics, will be homogeneous spaces of scalar products. In this paper we will show that on every such a space, there exist invariant metrics and an invariant measure, which is unique up to a multiplicative constant. These homogeneous spaces seem to be interesting by themselves and worthy of further study.     

This paper is organized as follows. Section 2 contains an outline of the Kijowski's method of constructing quantum states. In Section 3 we will describe the homogeneous spaces of scalar products and prove the existence and the uniqueness of invariant measures on these spaces. In Section 4 we will construct the space of quantum space, and in Section 5 we will discuss the construction. In Appendix we will present and prove some technical results (which include explicit formulas for invariant metrics and invariant measures on the homogeneous spaces).  

%***************************************************
\section{Preliminaries}
%***************************************************

%***************************************************
\subsection{Outline of the Kijowski's method \label{outl-kij}}
%***************************************************

The Kijowski's method requires to choose some d.o.f. on the phase space of a field theory\footnote{A d.o.f. is a real-valued function on the phase space.}, and organize them into a directed set $(\Lambda,\geq)$---each element $\lambda\in\Lambda$ corresponds to a finite number of d.o.f., and $\lambda'\geq\la$ if $\la'$ represents all d.o.f. related to $\la$. Next, one associates with every $\la\in\Lambda$ a Hilbert space $\Hc_\la$. These two steps of the construction, should be done in such a way that the resulting family $\{\Hc_\la\}_{\la\in\Lambda}$ of Hilbert spaces, is extendable to a richer structure, called in \cite{mod-proj} {\em family of factorized Hilbert spaces}, and introduced (under a different name) in \cite{proj-lt-II}.

Here we will present a slightly simplified definition of family of factorized Hilbert spaces, which, however, will be sufficient for our purpose. The simplification is achieved by requiring the directed set $\Lambda$ to be also partially ordered. We will write $\lambda'>\la$ if $\la'\geq \la$ and $\la'\neq \la$.  

\begin{df}
A family of factorized Hilbert spaces is a quintuplet
\[
\Big(\Lambda,\Hc_\lambda,\tilde{\Hc}_{\lambda'\lambda},\Phi_{\lambda'\lambda},\Phi_{\lambda''\lambda'\lambda}\Big)
\]
such that:
\begin{enumerate}
\item $\Lambda$ is a directed and partially ordered set,
\item for every $\lambda\in\Lambda$, $\Hc_\lambda$ is a Hilbert space,
\item for every $\lambda'>\lambda$, $\tilde{\Hc}_{\lambda'\lambda}$ is a Hilbert space, and
\begin{equation}
\Phi_{\lambda'\lambda}:\Hc_{\lambda'}\to\tilde{\Hc}_{\lambda'\lambda}\ot\Hc_\lambda
\label{Phi}
\end{equation}
is a Hilbert space isomorphism;
\item for every $\lambda''>\lambda'>\lambda$,
\begin{equation}
\Phi_{\lambda''\lambda'\lambda}:\tilde{\Hc}_{\lambda''\lambda}\to\tilde{\Hc}_{\lambda''\lambda'}\ot\tilde{\Hc}_{\lambda'\lambda}
\label{Phi-3}
\end{equation}
is a Hilbert space isomorphism such that the following diagram
\begin{equation}
\begin{CD}
\Hc_{\lambda''} @>\Phi_{\lambda''\lambda}>> \tilde{\Hc}_{\lambda''\lambda}\ot\Hc_\lambda\\
@VV\Phi_{\lambda''\lambda'}V       @VV\Phi_{\lambda'' \lambda'\lambda}\ot\id V \\
\tilde{\Hc}_{\lambda''\lambda'}\ot\Hc_{\lambda'}@>\id\ot\Phi_{\lambda'\lambda}>>  \tilde{\Hc}_{\lambda''\lambda'}\ot\tilde{\Hc}_{\lambda'\lambda}\ot\Hc_\lambda
\end{CD}
\label{diagram}
\end{equation}
is commutative;
\item $\tilde{\Hc}_{\lambda'\lambda} $ and $\Phi_{\lambda'\lambda}$ for other pairs $(\lambda',\lambda)$, and $\Phi_{\lambda''\lambda'\lambda}$ for other triplets $(\lambda'',\lambda',\lambda)$, are not defined.
\end{enumerate}
\label{ffHs}
\end{df}

In some cases it can be quite difficult to construct a family of factorized Hilbert spaces (see e.g. \cite{proj-lqg-I}), but once it is obtained, the remaining steps of the construction of quantum states are straightforward  \cite{proj-lt-II,mod-proj}. One denotes by $\Bc_\la$ the $C^*$-algebra of all bounded operators on $\Hc_\la$, and by $\Sc_\la$ the set of all states (i.e. normed positive linear functionals) on $\Bc_\la$. For every $\la'>\la$, the map \eqref{Phi} and the unit operator $\mathbf{1}_{\la'\la}$ on $\tilde{\Hc}_{\la'\la}$, induce a unital injective $*$-homomorphism:
\begin{equation}
\Bc_\lambda\ni a\mapsto \iota_{\lambda'\lambda}(a):=\Phi^{-1}_{\la'\la}\circ(\mathbf{1}_{\la'\la}\ot a)\circ\Phi_{\la'\la}\in\Bc_{\la'}.
\label{BB}
\end{equation}
For each $\la$, one defines a map $\iota_{\la\la}:\Bc_\la\to\Bc_{\la}$ to be the identity map on the algebra. Then the pull-back defined for every $\la'\geq \la$, 
\begin{equation}
\pi_{\la\la'}\equiv\iota^*_{\la'\la}:\Sc_{\la'}\to \Sc_\la,
\label{pi}
\end{equation}
is a surjection. Furthermore, the commutativity of the diagram \eqref{diagram} guarantees that $\{\Sc_\la,\pi_{\la\la'}\}_{\la\in\Lambda}$ is a projective family. The desired space $\Sc$ of quantum states for the field theory, is defined as the projective limit of the family. 

A byproduct of this construction is the inductive family $\{\Bc_\la,\iota_{\la'\la}\}_{\la\in\Lambda}$ of $C^*$-algebras. Its inductive limit $\Bc$ is again a $C^*$-algebra, which can be regarded as the algebra of quantum observables for the field theory.

There is a one-to-one correspondence between all quantum states in $\Sc$ and all states on the $C^*$-algebra $\Bc$ \cite{mod-proj}.       

%***************************************************
\subsection{Application of the Kijowski's method to metrics \label{app}}
%***************************************************

Let us now sketch briefly, how we are going to use the method outlined above, to construct a space of quantum states over a set of all metrics of fixed signature, defined on a manifold $\Mc$. To construct the space, we will choose rather natural d.o.f. on the set of the metrics---each of these d.o.f. will map a metric to its value at a point of the manifold. Next, for every point $x\in\Mc$, we will choose a measure $d\mu_x$ on the set $\Gamma_x$ of all scalar products, defined on the tangent space $T_x\Mc$, of signature coincident with that of the metrics ($\Gamma_x$ consists of values of all the metrics at $x$ \footnote{This statement will be made precise by Lemma \ref{k-sur}.}). This will give us a Hilbert space $L^2(\Gamma_x,d\mu_x)$ associated with the point. Finite tensor products of Hilbert spaces of this sort, will form a family $\{\Hc_\la\}_{\la\in\Lambda}$ of Hilbert spaces. This family will be extended in a simple and natural way, to a family of factorized Hilbert spaces, which will yield the desired space of quantum states, in the way described in the previous section.

The construction of the space of quantum states over the metrics will be straightforward, except the choice of the measures $\{d\mu_x\}$ on the spaces $\{\Gamma_x\}$. This choice will be described in the next section. 

%***************************************************
\section{Choice of measures $\{d\mu_x\}$ \label{ch-dmux}}
%***************************************************

In the present section we will select a measure $d\mu_x$ on every space $\Gamma_x$ of scalar products. Since we see no reason to distinguish {\em a priori} any elements of $\Gamma_x$, we will choose such a measure $d\mu_x$ on this space, which (in a well defined sense) will treat all elements of $\Gamma_x$ equally. Similarly, since we see no reason to distinguish {\em a priori} any points of the manifold $\Mc$, the assignment $x\mapsto d\mu_x$ will be chosen in a diffeomorphism invariant manner.      

%***************************************************
\subsection{Set of scalar products of fixed signature}
%***************************************************

Each measure $d\mu_x$ is supposed to be defined on a set of all scalar products of fixed signature, defined on a vector space. In this section we will describe some properties of such sets.

%***************************************************
\subsubsection{Manifold of scalar products \label{Gamma-mfd}}
%***************************************************

Let $V$ be a real $n$-dimensional ($0<n<\infty$) vector space.  A scalar product $\gamma$ on $V$ is a real-valued bilinear symmetric form on $V$, which satisfies the following non-degeneracy condition: if $\gamma(v,v')=0$ for every $v'\in V$, then $v=0$. Each scalar product $\gamma$ on $V$ is characterized by its signature $(p,p')$---if $(v_i)_{i=1,\ldots,n}$ is a basis of $V$, orthonormal with respect to $\gamma$, then $p$ is the number of vectors in the basis such that $\gamma(v_j,v_j)=1$, and $p'$ is the number of vectors in the basis such that $\gamma(v_j,v_j)=-1$. 

Let us fix a pair $(p,p')$ of non-negative integers such that $p+p'=n$, and denote by $\Gamma$ the space of all scalar products on $V$ of signature $(p,p')$. The space $\Gamma$ can be treated as {\em a real-analytic manifold} of dimension $n(n+1)/2$.  

To justify this statement, let us consider the set $\Sigma$ of all bilinear symmetric forms on the vector space $V$. Each basis $(v_i)$ of $V$ defines a global coordinate frame $(\sigma_{ij})_{i\leq j}$ on $\Sigma$:   
\begin{equation}
\Sigma\ni \sigma\mapsto\big(\sigma_{ij}(\sigma)\big)_{i\leq j}:=\big(\sigma(v_i,v_j)\big)_{i\leq j}\in\R^{n(n+1)/2},
\label{sij}
\end{equation}
which is a bijective map. This map can be used to ``pull-back'' the topology from $\R^{n(n+1)/2}$ onto $\Sigma$. Obviously, coordinate frames given by all bases of $V$, form an analytic atlas on $\Sigma$. Let us show now that $\Gamma$ is an open subset of $\Sigma$.   

To this end, let us fix a scalar product $\gamma_0\in \Gamma$ and a basis $(v_i)$ of $V$, orthonormal with respect to $\gamma_0$. Given $\sigma\in \Sigma$, let $m_k(\sigma)$ ($k\in\{1,2,\ldots,n\}$) be the principal $k\times k$ minor of the matrix $(\sigma_{ij}=\sigma(v_i,v_j))$. Obviously, each minor $m_k$ is a polynomial of the coordinates $(\sigma_{ij})_{i\leq j}$ given by $(v_i)$,  and thereby a continuous function on $\Sigma$. Since $m_k(\gamma_0)=\pm 1$, there exists an open neighborhood $U$ of $\gamma_0$ in $\Sigma$, such that each minor $m_k(\sigma)$ is non-zero for every $\sigma\in U$. This means in particular that each $\sigma\in U$ is nondegenerate i.e., $\sigma$ is a scalar product on $V$ of signature $(p_\sigma,p'_\sigma)$. Moreover \cite{sign}, 
\[
p'_\sigma=\frac{1}{2}n-\frac{1}{2}\sum_{k=1}^n\sgn\Big(\frac{m_k(\sigma)}{m_{k-1}(\sigma)}\Big),
\]                 
where $m_0(\sigma):=1$. Since each $m_k$ is either positive or negative on $U$, the r.h.s. of the above expression is a constant function on $U$. Consequently, $(p_\sigma,p'_\sigma)=(p,p')$ for every $\sigma\in U$. Therefore $U$ is a subset of $\Gamma$.        
 
We just showed that for each element of $\Gamma$, there exists an open subset of $\Sigma$, which contains the element and is a subset of $\Gamma$. Thus $\Gamma$ is an open subset of $\Sigma$, and thereby a  real-analytic manifold of dimension $n(n+1)/2$.    

Since now global coordinates $(\sigma_{ij})_{i\leq j}$ on $\Sigma$ given by a basis $(v_i)$ of $V$, when restricted to $\Gamma$, will by denoted by $(\gamma_{ij})_{i\leq j}$.   

In Appendix \ref{noncomp} we will show that $\Gamma$ is noncompact and connected. 

%***************************************************
\subsubsection{Homogeneous space of scalar products \label{Gamma-hom}}
%***************************************************

Let us consider the group $GL(V)$ of all linear automorphisms of $V$. Given basis $(v_i)$ of $V$, each element $g\in GL(V)$ can be represented by numbers $(g^i{}_j)_{i,j=1,\ldots,n}$ such that
\[
gv_j=g^i{}_jv_i,
\]
where $gv_j$ denotes the action of $g$ on the vector $v_j$. It is clear that the map
\[
GL(V)\ni g\mapsto (g^i{}_j)_{i,j=1,\ldots,n}\in \R^{n^2}
\]
defines a global coordinate frame on $GL(V)$. The atlas consisting of coordinate frames given by all bases of $V$, makes $GL(V)$ a real-analytic manifold and a Lie group \cite{waw}. 
%Moreover, the following map
%\[
%GL(V)\times GL(V)\ni(g_1,g_2)\mapsto g_1g^{-1}_2\in GL(V)
%\]
%is analytic with respect to this atlas \cite{waw}. Thus $GL(V)$ is a Lie group. 

There exists a {\em natural action} of the Lie group $GL(V)$ on the space $\Gamma$ of scalar products:
\begin{equation}
GL(V)\times\Gamma \ni(g,\gamma)\mapsto g\gamma:=g^{-1*}\gamma\in \Gamma,
\label{g-gamma}
\end{equation}
where $g^{-1*}$ is the pull-back: 
\[
(g^{-1*}\gamma)(v,v')=\gamma(g^{-1}v,g^{-1}v'),\quad v,v'\in V. 
\]
For every $g,g'\in GL(V)$ and every $\gamma\in\Gamma$ 
\begin{align*}
g'(g\gamma)&=(g'g)\gamma, & e\gamma&=\gamma
\end{align*}
---$e$ denotes here the identity of the group.

The action \eqref{g-gamma} is also {\em transitive} i.e., for every $\gamma,\gamma'\in\Gamma$ there exists $g\in GL(V)$ such that
\begin{equation}
\gamma'=g\gamma.
\label{g'=gg}
\end{equation}
Indeed, let $(v_i)$ be a basis  of $V$, orthonormal with respect to the scalar product $\gamma$, and let $(v'_i)$ be a basis  of $V$, orthonormal with respect to $\gamma'$. Assume that the ordering of elements of the bases is such that  $\gamma'(v'_i,v'_i)=\gamma(v_i,v_i)$. There exists an element $g\in GL(V)$ such that $g(v'_i)=v_i$. Then 
\[
\gamma'(v'_i,v'_j)=\gamma(v_i,v_j)=\gamma(g^{-1}(v'_i),g^{-1}(v'_j))
\]
and \eqref{g'=gg} follows. 

It is easy to see that the map \eqref{g-gamma}, when expressed in terms of coordinates $(\gamma_{ij})_{i\leq j}$ and $(g^i{}_j)$ (defined by the same basis of $V$), is a rational function of non-zero denominator. This implies that the map is {\em analytic}.

The properties of the space $\Gamma$ and the action \eqref{g-gamma} described above, allow us to conclude that the pair $(GL(V),\Gamma)$ equipped with the map \eqref{g-gamma}, is a {\em homogeneous space} (see e.g. \cite{waw}). It is isomorphic to the homogeneous space $GL(V)/H$, where $H$ is the isotropy group of a scalar product $\gamma_0\in \Gamma$:% a subgroup %\footnote{See Appendix \ref{G/H} for a short description of homogeneous spaces obtained by ``dividing'' a Lie group by its subgroup.} of $GL(V)$ defined as the isotropy group of a scalar product $\gamma_0\in \Gamma$:
\begin{equation}
H:=\{\ g\in GL(V)\ | \ g\gamma_0=\gamma_0\ \}.
\label{H}
\end{equation}
%where $\gamma_0$ is an arbitrary element of $\Gamma$.  
%A natural isomorphism between $GL(V)/H$ and $(GL(V),\Gamma)$ is given by \cite{waw}
%\[
%GL(V)/H\ni gH\mapsto g\gamma_0 \in \Gamma.
%\]

Note finally that $GL(V)$ is isomorphic to $GL(\dim V,\R)$, and $H$ is isomorphic to the (pseudo-)orthogonal group $O(p,p')$, where $(p,p')$ is the signature of scalar products constituting the space $\Gamma$. Thus the homogeneous space $(GL(V),\Gamma)$ of scalar products is isomorphic to $GL(\dim V,\R)/O(p,p')$.

%***************************************************
\subsection{Invariant measure on $\Gamma$---existence and uniqueness}
%***************************************************

%In Section \ref{app} we said that we would like a measure on a set of scalar products to treat equally all elements of this set. However, we did not make it precise, what it means ``to treat the elements equally''. In the case of the homogeneous space $(GL(V),\Gamma)$ this notion can be made precise easily: certainly, if $d\mu$ is a measure on $\Gamma$, invariant with respect to the group action, then no point of $\Gamma$ is distinguished by the measure.      

From now on to the end of the paper, unless stated otherwise, a measure will mean a regular Borel measure on a locally compact Hausdorff space (see e.g. \cite{cohn}). If $Y$ is such a space, then the symbol $C_c(Y)$ will denote the linear space of all real-valued continuous functions on $Y$ of compact support. 

We know that $\Gamma$ is an open subset of the space $\Sigma$ of all symmetric bilinear forms on $V$, and that the map \eqref{sij} is a homeomorphism. $\Gamma$ is then a locally compact Hausdorff space as being homeomorphic to an open subset of some $\R^N$.

Let $Y$ and $Y'$ be locally compact Hausdorff spaces and let $\varphi:Y\to Y'$ be a homeomorphism. If $d\mu$ is a regular Borel measure on $Y$ then by virtue of the Riesz representation theorem (see e.g. \cite{cohn}) there exist a unique regular Borel measure $\varphi_\star d\mu$ on $Y'$ such that for every $f\in C_c(Y')$, 
\[
\int_{Y'}f (\varphi_\star d\mu)=\int_Y (\varphi^\star f)\, d\mu,
\]     
where $\varphi^\star f$ is the pull-back of the function $f$: $(\varphi^\star f)(y)=f(\varphi(y))$. Below the measure $\varphi_\star d\mu$ will be called {\em pushforward measure}.

Let a pair $(G,Y)$, where $G$ is a Lie group, and $Y$ a (real-analytic) manifold,  equipped with a map
\begin{equation}
(G,Y)\ni (g,y)\mapsto gy\in Y, 
\label{g-y}
\end{equation}
be a homogeneous space. We will denote by $\bar{g}$ the diffeomorphism
\[
Y\ni y\mapsto gy\in Y
\]
given by $g\in G$---in the case of the homogeneous space $(GL(V),\Gamma)$
\begin{equation}
\bar{g}=g^{-1*}.
\label{bg-g*}
\end{equation}
We say that a measure $d\mu$ on $Y$ is {\em invariant} if for every $g\in G$ 
\begin{equation}
\bar{g}_\star d\mu=d\mu.
\label{inv-df}
\end{equation}

At the very beginning of Section \ref{ch-dmux}, we stated rather loosely that the measure $d\mu_x$ on $\Gamma_x$ should treat equally all points of this space. We know already that $\Gamma_x$ is a homogeneous space. Since the group action on every homogeneous space is transitive, we can make precise our intention regarding $d\mu_x$: it should be an {\em invariant measure} on $\Gamma_x$.  

In this section we will show that \emi on every homogeneous space $(GL(V),\Gamma)$ of scalar products, {\em there exists an invariant measure}, and that \emii the measure is {\em unique up to a positive multiplicative constant}.

To reach the two goals just set, it is necessary to recall some notions and facts regarding Haar measures on Lie groups. Let $d\mu_{H}$ be a left-invariant Haar measure on a Lie group $G$, and let $r_g:G\to G$, $g'\mapsto g'g^{-1}$, be a map defined by the right action of an element $g\in G$. Then there exists a positive real number $\Delta^G(g)$ such that \cite{waw,inv-meas}
\begin{equation}
r_{g\star}d\mu_H=\Delta^G(g)\,d\mu_H.
\label{rg-Haar}
\end{equation}
The map $G\ni g\mapsto \Delta^G(g)\in \R_{+}$ is called right-hand modulus \cite{inv-meas}, or modular function \cite{waw} of the group $G$ (here $\R_+$ is the set of all positive real number). For every Lie group $G$ its modulus $\Delta^G$ satisfies the following equation \cite{waw}: 
\begin{equation}
\Delta^G(g)=\big|\det \Ad(g^{-1}) \big|,
\label{mod-Ad}
\end{equation}
where $\Ad$ denotes the adjoint representation of the group $G$ on its Lie algebra. The group $G$ is said to be unimodular if $\Delta^G=1$. %It follows from \eqref{rg-Haar} that a left-invariant Haar measure on $G$ is right-invariant if and only if $\Delta^G= 1$, that is, if and only if $G$ is unimodular.        

\begin{prop}
Suppose that a Lie group $G$ and its closed Lie subgroup $H$ are unimodular. Then there exists a (non-zero) invariant measure on the homogeneous space $G/H$. The measure is unique up to a positive multiplicative constant.  
\label{uni-m}
\end{prop}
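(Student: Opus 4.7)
The plan is to prove the proposition by constructing the invariant measure on $G/H$ via the standard Weil-type averaging trick, and then deriving uniqueness from transitivity of the $G$-action together with the Riesz representation theorem.

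First, I fix a left-invariant Haar measure $d\mu_G$ on $G$ and a left-invariant Haar measure $d\nu_H$ on $H$; since $H$ is a closed Lie subgroup, such a measure exists and both groups being unimodular means these Haar measures are also right-invariant. Let $q:G\to G/H$ be the canonical projection. For every $f\in C_c(G)$, define
\[
(Tf)(q(g)):=\int_H f(gh)\,d\nu_H(h).
\]
The first block of work is to check that $Tf$ is well-defined, continuous, compactly supported on $G/H$, and that $T:C_c(G)\to C_c(G/H)$ is surjective. Surjectivity is the standard argument using a partition-of-unity lift from $G/H$ to $G$, once one notices that $q$ is an open map between locally compact Hausdorff spaces. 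The key algebraic identity is that $T$ intertwines the $G$-actions: $T(L_g f)=L_g(Tf)$, where $L_g$ denotes left translation.

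Next, I define a linear functional $I$ on $C_c(G/H)$ by the formula
\[
I(Tf):=\int_G f\,d\mu_G.
\]
The central technical step, and in my view the main obstacle, is well-definedness: I must show $Tf=0$ implies $\int_G f\,d\mu_G=0$. This is exactly where both unimodularity hypotheses enter. Concretely, for an auxiliary function $\psi\in C_c(G)$ with $T\psi\equiv 1$ on the support of $Tf$, one computes $\int_G \psi(g)(Tf)(q(g))\,d\mu_G(g)$ in two ways, using Fubini together with the right-invariance of $d\mu_G$ (unimodularity of $G$) and of $d\nu_H$ (unimodularity of $H$), to obtain $\int_G f\,d\mu_G$ on one side and $0$ on the other. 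The same computation also shows $I$ is positive on nonnegative elements of $C_c(G/H)$: any $\phi\geq 0$ in $C_c(G/H)$ can be written as $T f_0$ for some nonnegative $f_0\in C_c(G)$ (take $f_0=\psi\cdot(\phi\circ q)$ with $\psi\geq 0$ satisfying $T\psi=1$ on $\mathrm{supp}\,\phi$), and then $I(\phi)=\int_G f_0\,d\mu_G\geq 0$.

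Having a positive linear functional $I$ on $C_c(G/H)$, the Riesz representation theorem produces a regular Borel measure $d\mu$ on $G/H$ such that $\int_{G/H}\phi\,d\mu=I(\phi)$. Invariance under every $\bar g$ follows immediately from the intertwining property: for $\phi=Tf$,
\[
\int_{G/H}(\bar g^{\star}\phi)\,d\mu=I(T(L_g f))=\int_G L_g f\,d\mu_G=\int_G f\,d\mu_G=\int_{G/H}\phi\,d\mu,
\]
where the middle equality uses left-invariance of $d\mu_G$. The measure is non-zero because $I(T\psi)=\int_G\psi\,d\mu_G>0$ for any nonnegative $\psi\in C_c(G)$ that is not identically zero.

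For uniqueness, suppose $d\mu'$ is another invariant measure on $G/H$. Define a positive linear functional on $C_c(G)$ by $J(f):=\int_{G/H}(Tf)\,d\mu'$; invariance of $d\mu'$ together with the intertwining $T\circ L_g=L_g\circ T$ shows $J$ is left-invariant on $G$, so by uniqueness of the Haar measure there is a constant $c>0$ with $J(f)=c\int_G f\,d\mu_G=c\,I(Tf)$. Surjectivity of $T$ then yields $\int_{G/H}\phi\,d\mu'=c\int_{G/H}\phi\,d\mu$ for every $\phi\in C_c(G/H)$, and the Riesz representation theorem gives $d\mu'=c\,d\mu$, which completes the proof.
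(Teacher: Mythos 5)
Your proof is correct, but it takes a genuinely different route from the paper. The paper disposes of this proposition in two lines by citing Theorem 7.4.1 of Wijsman's book, which gives the general criterion: a (non-zero) invariant measure on $G/H$ exists if and only if $\Delta^G|_H=\Delta^H$, and is then unique up to a positive constant; unimodularity of $G$ and $H$ makes both moduli identically $1$, and the result follows. What you have done instead is essentially prove that cited theorem from scratch in the unimodular case, via the standard quotient-integral (Weil) construction: the averaging map $T:C_c(G)\to C_c(G/H)$, the functional $I(Tf)=\int_G f\,d\mu_G$, Riesz representation, and uniqueness pulled back to uniqueness of Haar measure on $G$. Your argument is sound --- in particular you correctly identify that the whole weight of the unimodularity hypotheses falls on the well-definedness of $I$ (i.e.\ $Tf=0\Rightarrow\int_G f\,d\mu_G=0$); one small point of precision is that in that Fubini computation what one literally uses is invariance of $d\mu_G$ under right translation by elements of $H$ together with inversion-invariance of $d\nu_H$ (to turn $\int_H\psi(gh^{-1})\,d\nu_H(h)$ into $(T\psi)(q(g))$), both of which are indeed equivalent to the unimodularity you invoke. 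The trade-off is clear: the paper's proof is shorter and delegates the real work to a reference, which is appropriate since the general theorem (with nontrivial but matching moduli) is needed nowhere else; your proof is self-contained, makes the mechanism visible, and would let a reader verify the result without consulting Wijsman, at the cost of several routine verifications (surjectivity of $T$, continuity and compact support of $Tf$, Fubini for Radon measures on the manifold $G/H$) that you reasonably flag as standard.
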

\begin{proof}
Theorem 7.4.1 in \cite{inv-meas} ensures that there exists a (non-zero) invariant measure\footnote{Let $Y$ be a locally compact Hausdorff space. In \cite{inv-meas} a measure on $Y$ is defined as a positive linear functional on $C_c(Y)$. By the Riesz representation theorem \cite{cohn} such a functional can be expressed as an integral given by a unique regular Borel measure on $Y$. On the other hand, any regular Borel measure on $Y$ defines a positive linear functional on $C_c(Y)$. Moreover, the invariance of the functional meant in \cite{inv-meas} coincides with the invariance of the corresponding measure defined by \eqref{inv-df}. Thus Theorem 7.4.1 in \cite{inv-meas} can be treated as one concerning regular Borel measures on $G/H$, being a locally compact Hausdorff space \cite{inv-meas}.} on $G/H$ if and only if the modulus $\Delta^G$ restricted to $H$, coincides with the modulus $\Delta^H$ of the group $H$. The theorem implies also, that if $d\mu$ and $d\mu'$ are two invariant measures on $G/H$, then there exists a positive number $c$ such that $d\mu=c\,d\mu'$.

If both $G$ and $H$ are unimodular then $\Delta^G = 1$ and $\Delta^H=1$, and the existence and the uniqueness of an invariant measure on $G/H$, follows by virtue of the theorem.   
\end{proof}

\begin{prop}
There exists a (non-zero) invariant measure on every homogeneous space $\Gamma$ of scalar products. The measure is unique up to a positive multiplicative constant.
\label{E!}
\end{prop}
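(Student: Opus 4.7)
The plan is to reduce the statement to Proposition \ref{uni-m} via the identification, established in Section \ref{Gamma-hom}, of the homogeneous space $(GL(V),\Gamma)$ with $GL(V)/H$, where $H$ is the isotropy subgroup of a fixed scalar product $\gamma_0\in\Gamma$ (see \eqref{H}), and is isomorphic to $O(p,p')$.

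First I would observe that $H$ is a closed Lie subgroup of $GL(V)$: the map $g\mapsto g\gamma_0$ is analytic, as noted in Section \ref{Gamma-hom}, hence continuous, so its fiber over $\{\gamma_0\}$ is a closed subgroup, and is a Lie subgroup by Cartan's closed-subgroup theorem. By Proposition \ref{uni-m}, it then suffices to check that both $GL(V)$ and $H$ are unimodular, and the existence and uniqueness (up to a positive multiplicative constant) of an invariant measure on $\Gamma\cong GL(V)/H$ will follow at once.

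For $GL(V)$, I would fix a basis of $V$ so that $GL(V)\cong GL(n,\R)$ and identify its Lie algebra with the space of $n\times n$ real matrices, on which $\Ad(g)$ acts by $X\mapsto gXg^{-1}$. Decomposing this as the composition of left multiplication by $g$ and right multiplication by $g^{-1}$, whose determinants as linear maps on the matrix space equal $(\det g)^n$ and $(\det g)^{-n}$ respectively, one gets $\det\Ad(g)=1$. By \eqref{mod-Ad} this yields $\Delta^{GL(V)}\equiv 1$. For $H\cong O(p,p')$, the adjoint action is again conjugation on the Lie algebra $\mathfrak{o}(p,p')$, and it preserves the symmetric bilinear form $(X,Y)\mapsto\mathrm{tr}(XY)$, which is non-degenerate on $\mathfrak{o}(p,p')$ (for $n\ge 3$ it is a nonzero multiple of the Killing form; for $n\in\{1,2\}$ the Lie algebra is abelian and unimodularity is immediate from $\Ad\equiv\id$ on the identity component, extended across the components of $O(p,p')$ by the same formula). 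Consequently $(\det\Ad(g))^2=1$, so $\Delta^H\equiv 1$ as well.

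The main obstacle I anticipate is the unimodularity of $H\cong O(p,p')$, since this group is non-compact whenever $p,p'>0$; however, the trace-form argument above (or, equivalently, a direct computation using the block structure of elements of $O(p,p')$ relative to a $\gamma_0$-orthonormal basis) dispatches it. An alternative, more constructive route would be to exhibit an invariant measure explicitly in the coordinates $(\gamma_{ij})_{i\le j}$ of Section \ref{Gamma-mfd}; such formulas are supplied in the Appendix and by themselves verify existence, while uniqueness would still be read off from Proposition \ref{uni-m}.
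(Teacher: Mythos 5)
Your proposal follows the same overall strategy as the paper's proof: identify $\Gamma$ with $GL(V)/H$, verify that $GL(V)$ and $H\cong O(p,p')$ are unimodular, and invoke Proposition \ref{uni-m}. The differences are in the details, and two of them deserve comment. First, replacing the Killing form by the trace form $(X,Y)\mapsto{\rm tr}(XY)$ is a genuine simplification: unlike the Killing form, the trace form is non-degenerate on $o(p,p')$ for every $p+p'=n\geq 2$ (on the one-dimensional algebras $o(2)$ and $o(1,1)$ one computes ${\rm tr}(X^2)=\mp 2a^2\neq 0$), so it treats all these signatures uniformly, whereas the paper must handle $O(2)$ (via compactness) and $O(1,1)$ (via an explicit description of $SO(1,1)$) separately because $o(p,p')$ fails to be semisimple when $n=2$. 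Having the trace form, you do not actually need your separate abelian argument for $n=2$ --- which is fortunate, because that argument as written is not correct: $\Ad$ does \emph{not} extend as the identity across the non-identity components of $O(2)$ or $O(1,1)$; for instance $h={\rm diag}(1,-1)\in O(1,1)$ gives $\Ad(h)=-\id$ on $o(1,1)$. The standard repair is the paper's observation that $g^2\in SO(p,p')$ for every $g\in O(p,p')$, whence $\big(\det\Ad(g)\big)^2=\det\Ad(g^2)=1$ and \eqref{mod-Ad} still yields $\Delta^H=1$. Second, the paper deliberately excludes $\dim V=1$ from the $GL(V)/H$ scheme, judging an application of Proposition \ref{uni-m} to a discrete isotropy group $H\cong\Z_2$ ``not safe,'' and instead identifies $\Gamma$ with the group $\R^*_+$ acting on itself, reducing the claim to existence and uniqueness of Haar measure (with the signature $(0,1)$ obtained from $(1,0)$ via $\gamma\mapsto-\gamma$). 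Your proposal silently subsumes this case into the general argument; that is defensible (a finite subgroup is a closed, unimodular, zero-dimensional Lie subgroup and the underlying quotient-measure theorem applies), but it should be stated rather than hidden in the $n\in\{1,2\}$ parenthetical, whose justification degenerates to the trivial Lie algebra when $n=1$. Your direct computation $\det\Ad(g)=(\det g)^{n}(\det g)^{-n}=1$ for $GL(n,\R)$ is correct and replaces the paper's citation.
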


\begin{proof}
We will treat the cases \emi $\dim V>1$ and \emii $\dim V=1$ separately. The reason is that in the latter case the isotropy group $H$ (see Equation \eqref{H}), is a discrete group isomorphic to $\Z_2$, and an application of Proposition \ref{uni-m} to this case, may not be safe.      

Let $G$ be a Lie group, and $Y$ a manifold. Assume that $(G,Y)$ is a homogeneous space, and that $d\mu$ is an invariant measure on $Y$. If $\phi:Y\to Y'$ is an isomorphism of homogeneous spaces $(G,Y)$ and $(G,Y')$ then $\phi_\star d\mu$ is an invariant measure on $Y'$. Indeed, $\phi$ being an isomorphism of the homogeneous spaces, commutes with the action of the group $G$, $\bar{g}\circ\phi=\phi\circ\bar{g}$,  and consequently 
\begin{equation*}
\bar{g}_\star\phi_\star d\mu=\phi_\star \bar{g}_\star d\mu= \phi_\star d\mu.
\end{equation*}

\paragraph{The case $\dim V>1$} Since each homogeneous space $\Gamma$ is isomorphic to $GL(V)/H$ with $H$ given by \eqref{H}, it is enough to prove the existence and the uniqueness of an invariant measure on every homogeneous space $GL(V)/H$. This latter task will be achieved by showing that both $GL(V)$ and $H$ are unimodular, and applying Proposition \ref{uni-m}.

The group $GL(V)$ is isomorphic to $GL(\dim V,\R)$, while $H$ is isomorphic to either $O(2)$ or $O(1,1)$ or $O(p,p')$ with $p+p'>2$.

The group $GL(\dim V,\R)$ is unimodular \cite{waw}. The group $O(2)$ is compact \cite{hall} and thereby unimodular \cite{waw}.

Regarding the group $O(1,1)$: it can be easily shown that  
\[
SO(1,1)=
\begin{Bmatrix}
\ \pm
\begin{pmatrix}
  \cosh\Psi & \sinh\Psi\\
  \sinh\Psi & \cosh\Psi
\end{pmatrix}\ \Bigg | \ \Psi\in\R \
\end{Bmatrix}.
\]
Therefore $SO(1,1)$ is commutative\footnote{$O(1,1)$ is not commutative---if $h={\rm diag}(1,-1)$, then $h\in O(1,1)$ and $hgh^{-1}=g^{-1}$ for every $g\in SO(1,1)$.}. Since for every element $A$ of the Lie algebra $o(1,1)$, and for every $t\in\R$, the exponential $\exp(tA)$ belongs to $SO(1,1)$, the equality 
\[
g\exp(tA)g^{-1}=\exp(tA)
\]
holds as long as $g\in SO(1,1)$. This means that the operator $\Ad(g)$ is the identity on $o(1,1)$ for every $g\in SO(1,1)$. 

On the other hand, if $g\in O(1,1)$ then $\det g=\pm 1$, and therefore $g^2\in SO(1,1)$. Thus the operator $\Ad(g^2)$ is the identity, and consequently 
\[
1=\det \Ad(g^2)=\big(\det \Ad(g)\big)^2
\]
on the whole $O(1,1)$. This fact together with \eqref{mod-Ad} imply that $O(1,1)$ is unimodular. 

Regarding the group $O(p,p')$ with $p+p'>2$: in this case the Lie algebra $o(p,p')$ is semisimple \cite{hall}, which means that the Killing form $K$ on the algebra is non-degenerate (the Cartan's criterion of semisimplicity). The Killing form is invariant with respect to the adjoint representation $\Ad$ of $O(p,p')$---this fact can be expressed in a basis $(\tau_\alpha)$ of $o(p,p')$ as follows: for every $g\in O(p,p')$ 
\[
K_{\alpha\beta}=K_{\mu\nu}\Ad(g)^\mu{}_\alpha\Ad(g)^\nu{}_\beta.
\]
Calculating the determinant of both sides of this equation, and taking into account the non-degeneracy of the Killing form, we conclude that for every $g\in O(p,p')$
\[
\big(\det\Ad(g)\big)^2=1.
\]
Now to see that $O(p,p')$ is unimodular, it is enough to apply \eqref{mod-Ad}. 

\paragraph{The case $\dim V=1$} Each Lie group is locally compact Hausdorff space, and every left invariant Haar measure on such a group is a regular Borel measure \cite{cohn}.

%Therefore, an invariant measure $d\mu$ on $\Gamma$ can be constructed analogously to a Haar measure on the multiplicative group of non-zero real numbers \cite{waw}---a coordinate expression for this measure reads as follows:
%\begin{equation}
%d\mu=\frac{1}{|\gamma_{11}|}d\gamma_{11}
%\label{mu-inv-1}
%\end{equation}
%(if the signature of scalar products in $\Gamma$ is $(0,1)$, then $\gamma_{11}<0$, and this is why there is the absolute value of $\gamma_{11}$ in the expression above). 

Let us consider first the signature $(1,0)$. In terms of global coordinates $\gamma_{11}$ on $\Gamma$ and $g^1{}_1$ on $GL(V)$ (given by the same basis of $V$), the action  \eqref{g-gamma} of $GL(V)$ on $\Gamma$ read as follows:
\begin{equation}
(g^1{}_1,\gamma_{11})\mapsto (g^1{}_{1})^{-2}\gamma_{11}.
\label{g-gamma-1}
\end{equation}
Note that this action coincides with the action of the multiplicative group $\R^*_+$ of positive real number on itself (the range of both $(g^1{}_1)^{-2}$ and $\gamma_{11}$ in \eqref{g-gamma-1} is equal $\R^*_+$). Thus the existence and the uniqueness of an invariant measure on $\Gamma$ is guaranteed by the existence and the uniqueness of a (left) invariant Haar measure on $\R^*_+$ (any left invariant Haar is unique up to a positive multiplicative constant (see e.g. \cite{cohn,inv-meas})). %These two fact imply that an invariant measure on $\Gamma$ is also unique up to a positive multiplicative constant.          

To prove the existence and the uniqueness of an invariant measure on $\Gamma$ in the case of the signature $(0,1)$, it is enough to note that the homogeneous space of scalar products on $V$  of the signature $(0,1)$, is isomorphic to the space of scalar products on $V$  of the signature $(1,0)$---an example of an isomorphism between the spaces is the map $\gamma\mapsto-\gamma$.    
\end{proof}

In Appendix \ref{meas-ex}, we will construct a (non-Riemannian, in general) metric on every space $\Gamma$ of scalar products, and will show that the metric is invariant with respect to the group action \eqref{g-gamma}. Using the metric we will then derive an explicit expression for an invariant measure on $\Gamma$.

%***************************************************
\subsection{Invariant measure on $\Gamma$---properties}
%***************************************************

Let $V_0$, $V_1$ and $V_2$  be real vector spaces of the same dimension, and let $\Gamma_i$ ($i=0,1,2$) be the homogeneous space of all scalar products of signature $(p,p')$ on $V_i$ (the signature is the same for all $i$). Any linear isomorphism $l:V_j\to V_i$ defines a pull-back $l^*:\Gamma_i\to \Gamma_j$, being a diffeomorphism between the manifolds $\Gamma_i$ and $\Gamma_j$.

\begin{lm}
Let $d\mu$ be an invariant measure on $\Gamma_0$, and let $l,l':V_1\to V_0$ be linear isomorphisms. Then the pushforward measure $l^*_\star d\mu$ on $\Gamma_1$ coincides with $l^{\prime *}_\star d\mu$.
\label{mu-uniq}  
\end{lm}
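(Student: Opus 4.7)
The plan is to reduce the equality of the two pushforward measures on $\Gamma_1$ to the $GL(V_0)$-invariance of $d\mu$ on $\Gamma_0$. Since both $l^\ast,l^{\prime\ast}:\Gamma_0\to\Gamma_1$ are diffeomorphisms, the desired equality $l^\ast_\star d\mu=l^{\prime\ast}_\star d\mu$ is equivalent, after applying the diffeomorphism $(l^{\prime\ast})^{-1}:\Gamma_1\to\Gamma_0$ to both sides, to
\[
\bigl((l^{\prime\ast})^{-1}\circ l^\ast\bigr)_\star d\mu = d\mu
\]
as measures on $\Gamma_0$. So it suffices to identify the selfmap $\psi:=(l^{\prime\ast})^{-1}\circ l^\ast$ of $\Gamma_0$ with the natural action \eqref{g-gamma} of some element of $GL(V_0)$.

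First, I would record the elementary identity $(l^\ast)^{-1}=(l^{-1})^\ast$ for a linear isomorphism $l:V_1\to V_0$, which follows directly from the defining formula $(l^\ast\gamma)(v,w)=\gamma(lv,lw)$. Using this, a short calculation shows that for every $\gamma\in\Gamma_0$ and $v,w\in V_0$,
\[
(\psi\gamma)(v,w)=\bigl((l^{\prime -1})^\ast l^\ast\gamma\bigr)(v,w)=\gamma\bigl(l(l^\prime)^{-1}v,\,l(l^\prime)^{-1}w\bigr)=(g^\ast\gamma)(v,w),
\]
where $g:=l\circ(l^\prime)^{-1}\in GL(V_0)$. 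Invoking \eqref{bg-g*}, which rewrites the natural action as $\bar h=h^{-1\ast}$, this means $\psi=\overline{g^{-1}}$, i.e. $\psi$ coincides with the diffeomorphism of $\Gamma_0$ induced by $g^{-1}\in GL(V_0)$ through the homogeneous space structure.

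Having made this identification, the conclusion is immediate from the assumed invariance of $d\mu$: by \eqref{inv-df},
\[
\bigl((l^{\prime\ast})^{-1}\circ l^\ast\bigr)_\star d\mu=\overline{g^{-1}}_\star d\mu=d\mu,
\]
and pushing forward once more by $l^{\prime\ast}$ yields $l^\ast_\star d\mu=l^{\prime\ast}_\star d\mu$, as claimed.

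The only real content is keeping the direction of the pull-backs and the correspondence \eqref{bg-g*} straight; once the composition $(l^{\prime\ast})^{-1}\circ l^\ast$ is recognized as the pull-back by an element of $GL(V_0)$, invariance of $d\mu$ does all the work. I do not expect any serious obstacle, since both $l^\ast$ and $l^{\prime\ast}$ are continuous (indeed analytic) between locally compact Hausdorff spaces, so the pushforwards are well defined regular Borel measures and the identity of measures reduces, via the Riesz representation theorem, to the identity of their integrals against $C_c(\Gamma_1)$, which is precisely what the above change of variables provides.
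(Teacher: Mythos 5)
Your proof is correct and follows essentially the same route as the paper: both identify the discrepancy between $l^*$ and $l^{\prime *}$ as the pull-back by an element of $GL(V_0)$ (your $g=l\circ (l')^{-1}$ is exactly the paper's $g_0$, for which $l^{\prime *}=l^*\circ\bar{g}_0$) and then invoke the invariance \eqref{inv-df} of $d\mu$. The only cosmetic difference is that you conjugate to a self-map of $\Gamma_0$ before applying invariance, whereas the paper factors $l^{\prime *}$ directly.
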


\begin{proof}
Note first that $l'\circ l^{-1}\equiv g^{-1}_0$ is an element of $GL(V_0)$ and consequently
\[
l^{\prime*}=l^*\circ \bar{g}_0,
\]
where we used \eqref{bg-g*}. By virtue of this observation and invariance of $d\mu$ 
\begin{equation*}
l^{\prime*}_\star d\mu = l^*_\star \bar{g}_{0\star} d\mu= l^*_\star d\mu.
\end{equation*}
\end{proof}

\begin{lm}
Let $d\mu$ be an invariant measure on $\Gamma_0$ and let $l:V_1\to V_0$ be linear isomorphisms. Then the pushforward measure $l^*_\star d\mu$ on $\Gamma_1$ is invariant. 
\label{lmu-inv}
\end{lm}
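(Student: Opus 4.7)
The plan is to reduce the invariance of $l^*_\star d\mu$ on $\Gamma_1$ to the invariance of $d\mu$ on $\Gamma_0$, by transporting each group element $g_1 \in GL(V_1)$ to a conjugate element in $GL(V_0)$ via $l$. Concretely, fix $g_1 \in GL(V_1)$ and set
\[
g_0 := l\circ g_1 \circ l^{-1} \in GL(V_0).
\]
One then needs to establish the intertwining identity $\bar{g}_1\circ l^{*}=l^{*}\circ\bar{g}_0$ between the diffeomorphisms of $\Gamma_0$ and $\Gamma_1$. Granted this, the result follows immediately from functoriality of pushforward together with invariance of $d\mu$:
\[
\bar{g}_{1\star}(l^{*}_\star d\mu)=(\bar{g}_1\circ l^{*})_\star d\mu=(l^{*}\circ \bar{g}_0)_\star d\mu= l^{*}_\star(\bar{g}_{0\star} d\mu)=l^{*}_\star d\mu.
\]

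The intertwining identity is the only thing requiring a short calculation, and it is really just the contravariance of pull-back applied to $(g^{-1})\circ l$. Using \eqref{bg-g*}, one has $\bar{g}_1\circ l^{*}=(g_1^{-1})^{*}\circ l^{*}=(l\circ g_1^{-1})^{*}$ and $l^{*}\circ\bar{g}_0=l^{*}\circ(g_0^{-1})^{*}=(g_0^{-1}\circ l)^{*}$. These coincide precisely because $l\circ g_1^{-1}=g_0^{-1}\circ l$, which is exactly the definition of $g_0$ as the conjugate of $g_1$ by $l$.

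I expect no genuine obstacle here: the statement is essentially a naturality property, and the argument mirrors the proof of the previous Lemma \ref{mu-uniq}, with the role of the two isomorphisms $l,l'$ replaced by $l$ and $\bar{g}_1\circ l$ (or equivalently $l\circ \bar{g}_0$). The only thing to be slightly careful about is keeping track of inverses, because the action \eqref{g-gamma} uses $g^{-1*}$ rather than $g^{*}$; this is precisely why conjugation on the $V_0$-side must be $l\circ g_1\circ l^{-1}$ and not $l^{-1}\circ g_1\circ l$. No additional regularity issues arise, since $l^{*}$ is a diffeomorphism of locally compact Hausdorff spaces and the pushforward of a regular Borel measure under such a map is automatically a regular Borel measure.
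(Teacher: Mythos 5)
Your proof is correct and follows essentially the same route as the paper: both define the conjugate element $g_0=l\circ g_1\circ l^{-1}$ (equivalently $g_0^{-1}=l\circ g_1^{-1}\circ l^{-1}$), establish the intertwining identity $\bar{g}_1\circ l^{*}=l^{*}\circ\bar{g}_0$, and conclude by functoriality of the pushforward together with invariance of $d\mu$. The only difference is that you spell out the contravariance computation behind the intertwining identity, which the paper leaves implicit.
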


\begin{proof}
  Let $g_1$ be an element of $GL(V_1)$. Then $l\circ g^{-1}_1\circ l^{-1}\equiv g^{-1}_0$ is an element of $GL(V_0)$. Using \eqref{bg-g*} we obtain
\[
\bar{g}_1\circ l^*=l^*\circ \bar{g}_0.
\]
Consequently,
\[
\bar{g}_{1\star} (l^*_\star d\mu) = l^*_\star \bar{g}_{0\star} d\mu=l^*_\star d\mu.
\]
\end{proof}

\begin{lm}
Let $l_1:V_1\to V_0$, $l_2:V_2\to V_0$ and $l:V_2\to V_1$ be linear isomorphisms. Suppose that $d\mu$ is an invariant measure on $\Gamma_0$. Then the measure $l^*_{1\star} d\mu$ on $\Gamma_1$ pushed forward by $l^*$ coincides with the measure $l^*_{2\star} d\mu$ on $\Gamma_2$.       
\label{3mu}
\end{lm}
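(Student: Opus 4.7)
The plan is to reduce the claim to Lemma \ref{mu-uniq} via the functoriality of pullback under composition. The composition $l_1\circ l:V_2\to V_0$ is itself a linear isomorphism, so it qualifies as an input to Lemma \ref{mu-uniq} alongside $l_2:V_2\to V_0$, and the two pushforward measures on $\Gamma_2$ must therefore agree.

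First I would write down the relevant maps: $l^*:\Gamma_1\to\Gamma_2$, $l_1^*:\Gamma_0\to\Gamma_1$, and $l_2^*:\Gamma_0\to\Gamma_2$. The key elementary observation is that pullback is contravariant, so
\[
l^*\circ l_1^*=(l_1\circ l)^*,
\]
as diffeomorphisms from $\Gamma_0$ to $\Gamma_2$. Next I would invoke the standard functoriality of the pushforward of measures under a composition of homeomorphisms, namely $(\varphi\circ\psi)_\star d\nu=\varphi_\star\psi_\star d\nu$, applied here to give
\[
l^*_\star(l^*_{1\star}d\mu)=(l^*\circ l_1^*)_\star d\mu=(l_1\circ l)^*_\star d\mu.
\]

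Finally, since $l_1\circ l:V_2\to V_0$ and $l_2:V_2\to V_0$ are both linear isomorphisms between the same two vector spaces, and $d\mu$ is an invariant measure on $\Gamma_0$, Lemma \ref{mu-uniq} applies and yields $(l_1\circ l)^*_\star d\mu=l^*_{2\star}d\mu$ on $\Gamma_2$. Combining this with the previous identity gives the desired equality $l^*_\star(l^*_{1\star}d\mu)=l^*_{2\star}d\mu$.

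No step here looks like a real obstacle: the functoriality of pullback and of pushforward is routine, and the only substantive fact used is Lemma \ref{mu-uniq}, which has already been established. The only point requiring a pinch of care is simply making sure the directions of the arrows line up (all three $l,l_1,l_2$ are indexed so that the pullbacks land in the correct spaces), so that the composition $l_1\circ l$ is well defined and lands in $V_0$, allowing Lemma \ref{mu-uniq} to be applied.
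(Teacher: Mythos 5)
Your proposal is correct and coincides with the paper's own proof: both reduce the claim to Lemma \ref{mu-uniq} by observing that $l_1\circ l:V_2\to V_0$ is a linear isomorphism and using the contravariance of pullback together with the functoriality of pushforward to get $l^*_\star (l^*_{1\star} d\mu)=(l_1\circ l)^*_\star d\mu=l^*_{2\star} d\mu$. No gaps.
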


\begin{proof}
Note that $l_1\circ l:V_2\to V_0$ is a linear isomorphism. Consequently,
\[
l^*_\star (l^*_{1\star} d\mu)= (l^*\circ l^*_1)_\star d\mu=(l_1\circ l)^*_\star d\mu=l^*_{2\star} d\mu
\]
---the last equality holds by virtue of Lemma \ref{mu-uniq}.
\end{proof}

The first two lemmas mean that there exists a natural one-to-one relation between invariant measures on two homogeneous spaces $\Gamma_0$, $\Gamma_1$  of scalar products of the same signature, defined on different vector spaces. The last lemma ensures a consistency of such relations in the case of any triplet $\Gamma_0,\Gamma_1,\Gamma_2$ of homogeneous spaces of the same sort:  invariant measures $d\mu_1$ and $d\mu_2$ on, respectively, $\Gamma_1$ and $\Gamma_2$ are related if and only if they are related to the same invariant measure $d\mu_0$ on $\Gamma_0$ \footnote{The properties of the natural relation can be formulated alternatively in the following way. Let $\Ic$ be any set, $\{V_i\}_{i\in\Ic}$ be a family of real vector spaces of the same dimension, and let $\Gamma_i$ be the homogeneous space of all scalar products of signature $(p,p')$ on $V_i$ (the signature is the same for all $i\in \Ic$). Let ${\rm Inv}(\Gamma_i)$ be the set of all invariant measures on $\Gamma_i$. The natural relation on $\bigcup_{i\in \Ic}{\rm Inv}(\Gamma_i)$ can be defined as follows: the measure $d\mu_i\in {\rm Inv}(\Gamma_i)$ is in the natural relation with  $d\mu_j\in {\rm Inv}(\Gamma_j)$ if there exists a linear automorphism $l:V_j\to V_i$ such that $l^*_\star d\mu_i= d\mu_j$. Using invariance of the measures in $\bigcup_{i\in \Ic}{\rm Inv}(\Gamma_i)$ and Lemma \ref{mu-uniq}, \ref{lmu-inv} and \ref{3mu} it is easy to show that \emi the natural relation is an equivalence relation and \emii each equivalence class $[d\mu_j]$ of the relation, contains exactly one element of ${\rm Inv}(\Gamma_i)$ for every $i\in\Ic$. \label{equiv}}.         

%***************************************************
\subsection{Diffeomorphism invariant field of invariant measures \label{sec-diff-inv}}
%***************************************************

Let $\Mc$ be a manifold. We fix a pair of non-negative integers $(p,p')$ such that $p+p'=\dim \Mc$. As before, we denote by $\Gamma_x$ the space of all scalar products on $T_x\Mc$, of signature $(p,p')$. Obviously, for every $x$ the pair $(GL(T_x\Mc),\Gamma_x)$ is a homogeneous space.

Let us choose a measure $d\mu_x$ on $\Gamma_x$ for every $x\in\Mc$. The resulting assignment $x\mapsto d\mu_x$ can be thought of as a {\em field of measures} or a {\em measure field} on the manifold $\Mc$. In this section we will construct a measure field on $\Mc$, which does not distinguish any point of the manifold.  

To this end, let us choose a point $x_0\in \Mc$ and an {\em invariant} measure $d\mu_{x_0}$ on the homogeneous space $\Gamma_{x_0}$. Then consider the following measure field on $\Mc$:
\begin{equation}
x\mapsto d\mu_{x}:=l^*_{x\star} \,d\mu_{x_0},
\label{diff-inv}
\end{equation}
where $l_x$ is any linear isomorphism from $T_x\Mc$ onto $T_{x_0}\Mc$, and $l^*_x:\Gamma_{x_0}\to \Gamma_x$ the corresponding pull-back. By virtue of Lemma \ref{mu-uniq} the measure $d\mu_x$ does not depend on the choice of the isomorphism $l_x$, and Lemma \ref{lmu-inv} guarantees that $d\mu_x$ is an invariant measure on the homogeneous space $\Gamma_x$. Thus \eqref{diff-inv} is a field of invariant measures. 

Note that the measure field \eqref{diff-inv} is constructed by means of the natural relation\footnote{Using the description of the natural relation introduced in Footnote \ref{equiv}, one can say that the values of the measure field \eqref{diff-inv} form the equivalence class $[d\mu_{x_0}]\subset \bigcup_{x\in\Mc}{\rm Inv}(\Gamma_x)$.} between invariant measures, introduced in the previous section: for every $x\in\Mc$, the measure $d\mu_x$ is in the natural relation with the measure $d\mu_{x_0}$. Moreover, the consistency of the natural relation, guaranteed by Lemma \ref{3mu}, ensures that for every $x,x'\in\Mc$, the measures $d\mu_x$ and $d\mu_{x'}$ are in the natural relation. In this sense, the measure field \eqref{diff-inv} does not distinguish any point of the manifold, including $x_0$. As such, the field should be diffeomorphism invariant. Let us then show that, indeed, it is the case.           

Let $x\mapsto d\mu_x$ be any measure field on $\Mc$, and let $\chi:\Mc\to \Mc$ be a diffeomorphism. If $x_1=\chi(x_2)$, then $\chi$ induces the tangent map $T\chi:T_{x_2}\Mc\to T_{x_1}\Mc$ being a linear isomorphism. The corresponding pull-back $T\chi^*$ is a diffeomorphism from $\Gamma_{x_1}$ onto $\Gamma_{x_2}$. Therefore the pull-back can be used to push forward the measure $d\mu_{x_1}$ to a measure $T\chi^*_\star d\mu_{x_1}$ on $\Gamma_{x_2}$. It this way the diffeomorphism $\chi$ transforms the measure field $x\mapsto d\mu_x$ to another measure field
\begin{equation}
x\mapsto (\chi d\mu)_x:=T\chi^*_\star \,d\mu_{\chi^{-1}(x)}.
\label{dmu-trans}
\end{equation}
We will say that the measure field $x\mapsto d\mu_x$ is {\em diffeomorphism invariant} if
\[
(\chi d\mu)_x=d\mu_x
\]
for every $x\in\Mc$ and for every diffeomorphism $\chi$ of $\Mc$.

%\begin{lm}
%The measure field
%\[
%x\mapsto d\mu_x
%\]
%given by \eqref{diff-inv} is diffeomorphism invariant.
%\end{lm}
%\begin{proof}
%

Consider now the measure field \eqref{diff-inv}, any point $x_1\in \Mc$ and any diffeomorphism $\chi$ of $\Mc$. Let $x_1=\chi(x_2)$, and let $l_{x_1}$ and $ l_{x_2}$ be linear isomorphisms used to define the measures, respectively,  $d\mu_{x_1}$ and $d\mu_{x_2}$ via \eqref{diff-inv}. For $i=0,1,2$, denote $V_i\equiv T_{x_i}\Mc$ and 
\begin{align*}
l_1&\equiv l_{x_1}, & l_2&\equiv l_{x_2}, &l&\equiv T\chi:T_{x_2}\Mc\to T_{x_1}\Mc.
\end{align*}
Now it is enough to  apply Lemma \ref{3mu} and Equation \eqref{dmu-trans} to conclude that 
\[
d\mu_{x_1}=T\chi^*_\star d\mu_{x_2}=T\chi^*_\star d\mu_{\chi^{-1}(x_1)}=(\chi d\mu)_{x_1}.
\]
%\end{proof}
This means that the measure field \eqref{diff-inv} is diffeomorphism invariant.

Thus we reached the goal, set at the very beginning of Section \ref{ch-dmux}---the field \eqref{diff-inv} of invariant measures does not distinguish any point of the manifold $\Mc$, and each measure $d\mu_x$, being a value of the field, treats equally all elements of $\Gamma_x$.

%We can then say that the measure field \eqref{diff-inv} is a {\em diffeomorphism invariant field of invariant measures}. % defined on the spaces $\{\Gamma_x\}_{x\in\Mc}$. %As such, the field does not distinguish any point in the manifold $\Mc$.   

%***************************************************
\section{Construction of quantum states and quantum observables over metrics}
%***************************************************

We are now ready to construct the desired spaces of quantum states and the related algebra of quantum observables over metrics of the same signature, defined on a manifold.       

%***************************************************
\subsection{Directed set of d.o.f.}
%***************************************************

Let us recall that the first step of the construction of quantum states by the Kijowski's method, is the choice of a directed set $(\Lambda,\geq)$ of d.o.f.. Here we will describe such a set, which is suitable for application of the method to metrics.   

Let $\Mc$ be a smooth\footnote{Throughout this paper ``smooth'' means ``of class $C^\infty$''.} manifold. Denote by $\Qc(\Mc)$ the set of all smooth metrics on $\Mc$ of signature $(p,p')$, and assume that the set is non-empty\footnote{The set $\Qc(\Mc)$ is empty for some manifolds and signatures e.g. the only two-dimensional compact manifolds, which admit a global metric of signature $(1,1)$, are the torus and the Klein bottle \cite{steenrod}.}. Obviously, if $q$ is a metric belonging to $\Qc(\Mc)$, then its value $q_x$ at the point $x$, is an element of the homogeneous space $\Gamma_x$ (see the beginning of Section \ref{sec-diff-inv}). As mentioned in Section \ref{app}, maps of the following form:
\begin{equation}
\Qc(\Mc)\ni q\mapsto q_x\in \Gamma_x,
\label{kappa}
\end{equation}
given by all points in $\Mc$, will be treated as d.o.f..

\begin{lm}
If the set $\Qc(\Mc)$ is non-empty, then the map \eqref{kappa} is surjective for every $x\in\Mc$.  
\label{k-sur}
\end{lm}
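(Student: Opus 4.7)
The plan is to perturb a given metric $q_0 \in \Qc(\Mc)$ (which exists by the non-emptiness assumption) within a small coordinate patch around $x$, by pulling it back along a compactly supported smooth automorphism of the tangent bundle whose value at $x$ carries $(q_0)_x$ to the prescribed $\gamma \in \Gamma_x$.

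By transitivity of the $GL(T_x\Mc)$-action on $\Gamma_x$ (Section \ref{Gamma-hom}), I first pick $g \in GL(T_x\Mc)$ with $g(q_0)_x = \gamma$. A key preliminary adjustment: if $\det g < 0$, I replace $g$ by $gh$, where $h$ lies in the isotropy group $H \cong O(p,p')$ of $(q_0)_x$ and satisfies $\det h = -1$. Such $h$ always exists --- for instance the reflection represented by $\operatorname{diag}(-1,1,\ldots,1)$ in a $(q_0)_x$-orthonormal basis belongs to $O(p,p')$ as long as $p \geq 1$, and an analogous reflection works when $p' \geq 1$; since $p + p' = \dim \Mc \geq 1$, at least one of these is available. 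Because $h$ fixes $(q_0)_x$, the identity $(gh)(q_0)_x = \gamma$ still holds, and now $\det g > 0$.

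Next, pick a chart $(U,\phi)$ around $x$ with $\phi(x) = 0$ and $B(0,2r) \subset \phi(U)$ for some $r > 0$; the coordinate frame trivialises $TU$ and identifies each $GL(T_y\Mc)$ with $GL(n,\R)$. Since $GL_+(n,\R)$ is open and connected, there is a smooth path $\alpha : [0,1] \to GL_+(n,\R)$, constant near the endpoints, with $\alpha(0) = I$ and $\alpha(1) = g$. Let $\rho : \R^n \to [0,1]$ be a smooth bump equal to $1$ on $B(0,r)$ and to $0$ outside $B(0,2r)$, and set $F(y) := \alpha(\rho(y))$. Then $F : \phi(U) \to GL(n,\R)$ is smooth, $F(0) = g$, and $F \equiv I$ outside $B(0,2r)$. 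Define
\[
q(y) := F(y)^{-1*} q_0(y) \text{ on } U, \qquad q(y) := q_0(y) \text{ on } \Mc \setminus U.
\]
These two prescriptions agree on the overlap $U \setminus \phi^{-1}(B(0,2r))$, where $F = I$, so $q$ is a well-defined smooth symmetric $(0,2)$-tensor field on $\Mc$. Because each $F(y)$ is fibrewise invertible, the pullback preserves signature pointwise, so $q \in \Qc(\Mc)$; and $F(x) = g$ yields $q_x = g(q_0)_x = \gamma$.

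The main obstacle is exactly the determinant-sign step. The domain $\phi(U)$ is connected, so the image of any continuous $F : \phi(U) \to GL(n,\R)$ lies in a single connected component of $GL(n,\R)$; consequently, without $\det g > 0$ no smooth $F$ could interpolate between $I$ outside a ball and $g$ at the centre while remaining fibrewise invertible. The presence of an orientation-reversing reflection in $O(p,p')$ is exactly what lets this obstruction be absorbed into the $H$-ambiguity of $g$.
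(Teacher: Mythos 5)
Your proof is correct and follows essentially the same route as the paper's: arrange $\det g>0$ (the paper does this by flipping a basis vector rather than composing with a reflection in $O(p,p')$, but it is the same point), then use connectedness of $GL^{+}(n,\R)$ to build a smooth fibrewise automorphism over a chart that equals $g$ at $x$ and the identity outside a compact set, and pull the given metric back through it. The only difference is cosmetic: you interpolate via a bump function composed with a smooth path, whereas the paper composes a ``lazy'' reparameterized path with the radial function $r^2$.
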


\noindent A proof of this lemma can be found in Appendix \ref{proof}.

Since  d.o.f. \eqref{kappa} are unambiguously labeled by points of $\Mc$, each finite subset of the manifold represents unequivocally a finite set of d.o.f.. It is then natural to define the set $\Lambda$ to be the set of all finite subsets of the manifold $\Mc$. We say that $\lambda'\in \Lambda$ is greater or equal to $\lambda\in\Lambda$,
\[
\lambda'\geq \lambda,
\]
if $\lambda\subset \lambda'$. Obviously $(\Lambda,\geq)$ is a directed and partially ordered set.

%***************************************************
\subsection{Family of factorized Hilbert spaces \label{fam}}
%***************************************************

In this section we will construct a family of factorized Hilbert spaces (Definition \eqref{ffHs}) over the set $\Qc(\Mc)$. 

Let $x\mapsto d\mu_x$ be an arbitrary diffeomorphism invariant field of invariant measures, defined via \eqref{diff-inv}. This field makes it possible to define a Hilbert space $\Hc_x$ over each $\Gamma_x$:
\[
\Hc_x:=L^2(\Gamma_x,d\mu_x).
\]
As shown in Appendix \ref{H-sep}, this Hilbert space is {\em separable} for every $x$. Suppose that $\lambda=\{x_1,\ldots, x_N\}\in\Lambda$. Let\footnote{The set $\la$ is unordered, thus to define the tensor product at the r.h.s. of \eqref{H-la}, one has to order elements of $\la$. However, every choice of the ordering is equally well suited for our purposes, and nothing essential depends on the choice. Therefore we will neglect this subtlety.}
\begin{equation}
\Hc_\lambda:=\Hc_{x_1}\ot\ldots\ot \Hc_{x_N}.
\label{H-la}
\end{equation}

Now we will extend the family $\{\Hc_\la\}_{\la\in\Lambda}$ to a family of factorized Hilbert spaces. Recall that we write $\la'>\la$, if $\la'\geq \la$ and $\la'\neq \la$.  Note that if $\la'>\la$, then $\la'\setminus \la$ is again an element of $\Lambda$. Let 
\begin{equation}
\tilde{\Hc}_{\la'\la}:=\Hc_{\la'\setminus\la}.
\label{til-H}
\end{equation}
Then for every $\la'>\la$ and for every $\la''>\la'>\la$,  
\begin{align}
\Hc_{\la'}&=\tilde{\Hc}_{\la'\la}\ot\Hc_\la, & \tilde{\Hc}_{\lambda''\la}&=\tilde{\Hc}_{\lambda''\lambda'}\ot\tilde{\Hc}_{\lambda'\lambda}.
\label{HHH}
\end{align}
These equalities allow us to define the isomorphisms $\Phi_{\la'\la}$ and $\Phi_{\la''\la'\la}$ (see \eqref{Phi} and \eqref{Phi-3}) as the identities on, respectively, $\Hc_{\la'}$  and $\tilde{\Hc}_{\lambda''\la}$. It is a simple exercise to show that the quintuplet $(\Lambda,\Hc_\lambda,$ $\tilde{\Hc}_{\lambda'\lambda},\Phi_{\lambda'\lambda},\Phi_{\lambda''\lambda'\lambda})$ just defined, is a family of factorized Hilbert spaces. 

This family yields the desired space $\Sc$ of quantum states and the $C^*$-algebra $\Bc$ of quantum observables, built over the set $\Qc(\Mc)$, as it is described in Section \ref{outl-kij}.  

%***************************************************
\section{Discussion \label{disc}}
%***************************************************

%***************************************************
\subsection{Uniqueness of the space of quantum states built over metrics}
%***************************************************

The construction of the space $\Sc$ of quantum states over a set $\Qc(\Mc)$, seems to be natural, except the arbitrary choice of the diffeomorphism invariant field \eqref{diff-inv} of invariant measures. However, the freedom to choose different fields of this sort, has no significant implications.

To justify this statement, let us consider two fields $x\mapsto d\mu_{x}$ and $x\mapsto d\check{\mu}_{x}$ of invariant measures given by \eqref{diff-inv}. Let us then compare step by step objects appearing at each stage of the construction procedure, leading from each measure field to the resulting spaces of quantum states (the procedure is described in Sections \ref{outl-kij} and \ref{fam}). Objects defined with use of the field  $x\mapsto d\check{\mu}_{x}$, will be distinguished by the check $\check{\phantom{a}}$ from their counterparts given by the field $x\mapsto d\mu_{x}$.  

Let $x_0$ be an arbitrary point of the manifold $\Mc$. By virtue of Proposition \ref{E!}, there exists a positive constant $c$ such that $d\check{\mu}_{x_0}=c\,d\mu_{x_0}$. Pushing forward a measure, which appears in \eqref{diff-inv}, commutes with multiplication of the measure by a constant. Therefore for every $x\in\Mc$, 
\begin{equation}
d\check{\mu}_{x}=c\,d\mu_x.
\label{mux-cmux}
\end{equation}
Consequently, for every $\la\in\Lambda$,   
\[
\Hc_\la\ni\Psi\mapsto U_\la(\Psi):=\frac{1}{(\sqrt{c})^{\# \la}}\Psi \in \check{\Hc}_\la
\]
is a {\em natural} Hilbert space isomorphism (here $\#\la$ denotes the number of elements of $\la$). Clearly, if $\la'>\la$, then 
\begin{equation}
U_{\la'}=U_{\la'\setminus \la}\ot U_{\la}
\label{UUU}
\end{equation}
(see Equations \eqref{til-H} and \eqref{HHH}). 

The map $U_\la$ generates an isomorphism of the (unital) $C^*$-algebras $\Bc_\la$ and $\check{\Bc}_\la$:
\begin{equation}
\Bc_\la\ni a\mapsto u_\la(a):=U_\la\circ a\circ U^{-1}_\la\in \check{\Bc}_\la.
\label{BuB}
\end{equation}
Using \eqref{UUU} and the fact, that here the map $\Phi_{\la'\la}$ (see \eqref{Phi}) and its counterpart $\check{\Phi}_{\la'\la}$ are identities, it is easy to convince ourselves that for every $\la'\geq\la$, 
\begin{equation}
\check{\iota}_{\la'\la}=u_{\la'}\circ\iota_{\la'\la}\circ u^{-1}_\la,
\label{ui-iu}
\end{equation}
where the map $\iota_{\la'\la}:\Bc_{\la}\to\Bc_{\la'}$ is given by \eqref{BB} if $\la'>\la$, and is the identity on $\Bc_\la$ if $\la'=\la$. Equation \eqref{ui-iu} means that the family $\{u_\la\}_{\la\in\Lambda}$ of $C^*$-algebra isomorphisms, maps the inductive family $\{\Bc_\la,\iota_{\la'\la}\}_{\la\in\Lambda}$ to the counterpart family $\{\check{\Bc}_\la,\check{\iota}_{\la'\la}\}_{\la\in\Lambda}$. It is not a difficult exercise to show that the family $\{u_\la\}_{\la\in\Lambda}$ induces a natural isomorphism
\begin{equation}
\Uc:\Bc\to\check{\Bc},
\label{UBB}
\end{equation}
where $\Bc$ and $\check{\Bc}$ are the $C^*$-algebras of quantum observables defined respectively by the inductive families\footnote{$\Bc$ is the inductive limit of the family $\{\Bc_\la,\iota_{\la'\la}\}_{\la\in\Lambda}$. A definition of the inductive limit of an inductive family of $C^*$-algebras, needed to show that \eqref{UBB} is an isomorphism, can be found in e.g. \cite{mod-proj}.}.   

%Consider now the corresponding projective families $\{\Sc_\la,\pi_{\la\la'}\}_{\la\in\Lambda}$ and $\{\check{\Sc}_\la,\check{\pi}_{\la\la'}\}_{\la\in\Lambda}$. 
On the other hand, for every $\la$, the map \eqref{BuB} defines the pull-back
\[
u^*_\la:\check{\Sc}_\la\to {\Sc}_\la
\]
being an isomorphism between the two spaces. Recall that for every $\la'\geq \la$, the projection $\pi_{\la\la'}:\Sc_{\la'}\to \Sc_{\la}$ is defined as the pull-back \eqref{pi} of $\iota_{\la'\la}$. By virtue of \eqref{ui-iu} the counterpart projection 
\[
\check{\pi}_{\la\la'}:=\check{\iota}^*_{\la'\la}=u^{-1*}_{\la}\circ\iota^*_{\la'\la}\circ u^*_{\la'}= u^{-1*}_{\la}\circ\pi_{\la\la'}\circ u^*_{\la'}.      
%\label{p-upu}
\]
This equation allow us to conclude that the family $\{u^*_\la\}_{\la\in\Lambda}$ of isomorphisms, maps the projective family $\{\check{\Sc}_\la,\check{\pi}_{\la\la'}\}_{\la\in\Lambda}$ onto $\{\Sc_\la,\pi_{\la\la'}\}_{\la\in\Lambda}$, and generates an isomorphism 
\[
\Uc^*:\check{\Sc}\to\Sc
\]
between the resulting spaces of quantum states such that 
\[
\big(\Uc^*(\check{{s}})\big)({a})=\check{{s}}\big(\Uc({a})\big),
\]  
for every $\check{{s}}\in\check{\Sc}$ and ${a}\in\Bc$ (both sides of the equation above are well defined, since every element of a space $\Sc$ of quantum states, obtained via the Kijowski's method, defines unambiguously a state on the corresponding $C^*$-algebra $\Bc$ of quantum observables \cite{mod-proj}).

We just showed that \emi any two diffeomorphism invariant fields $x\mapsto d\mu_{x}$ and $x\mapsto d\check{\mu}_{x}$ of invariant measures, given  by \eqref{diff-inv}, are in the simple relation \eqref{mux-cmux}, and that \emii this relation generates in an unambiguous way the isomorphisms $\Uc$ and $\Uc^*$ between, respectively, the resulting algebras $\Bc$ and $\check{\Bc}$ of quantum observables, and the resulting spaces $\check{\Sc}$ and $\Sc$ of quantum states. This fact allow us to state that both the space $\Sc$ of quantum states and the algebra $\Bc$ of quantum observables, built in the present paper over the set $\Qc(\Mc)$, are {\em unique up to natural isomorphisms}.     

%***************************************************
\subsection{Other remarks}
%***************************************************
  
Note that the space $\Sc$ and the related algebra $\Bc$ exists for every manifold $\Mc$ and every signature $(p,p')$, even if the corresponding space $\Qc(\Mc)$ of smooth metrics is empty.      

The construction of the space $\Sc$ is also very simple, at least in comparison with similar spaces built in \cite{q-stat,proj-lqg-I} for GR. Thanks to this simplicity, it is easy to describe explicitly some elements of $\Sc$. Indeed, let us choose for every $x\in\Mc$ a normed element $\Psi_x$ of $\Hc_x$. The resulting assignment $x\mapsto \Psi_x$ can be thought of as a {\em field of quantum states} on the manifold $\Mc$. Next define for $\la=\{x_1,\ldots,x_N\}\in\Lambda$
\[
\Psi_\la=\Psi_{x_1}\ot\ldots\ot\Psi_{x_N}\in\Hc_\la.
\]  
Then the map
\[
\Bc_\la\ni a\mapsto s_\la(a):=\scal{\Psi_\la}{\,a\,\Psi_\la}\in\C,
\]
where $\scal{\,\cdot\,}{\,\cdot\,}$ is the inner product on $\Hc_\la$, is a pure state on $\Bc_\la$. For every $\la'\geq \la$,
\[
\pi_{\la\la'}s_{\la'}=s_\la,
\]
where $\pi_{\la\la'}$ is given by \eqref{pi}. Consequently, the net $\{s_\la\}_{\la\in\Lambda}$ is an element of the projective limit $\Sc$ (see e.g. \cite{prof-gr}). Thus $\Sc$ contains elements built solely of pure states, whereas in general there is no guarantee that a space of quantum states constructed by the Kijowski's method, contains any element built solely of normal states (i.e. of states given by density operators) \cite{sl-phd,mod-proj}. 

It is also worth to note that the element $\{s_\la\}_{\la\in\Lambda}$ of $\Sc$ constructed above, is given by  the field $x\mapsto \Psi_x$ of quantum states, that is, by a section of the bundle-like set $\bigcup_{x\in\Mc}\Hc_x$. This observation may be perhaps helpful while building a quantum geometry based on the space $\Sc$.   
 
%denote by $\Bc_x$ the $C^*$-algebra of all bounded operators on the Hilbert space $\Hc_x$  and suppose that for every $x\in \Mc$ a state $s_x$ on $\Bc_x$ is arbitrarily choosen.      

%We mentioned in the introduction that the space $\Sc$ built in this paper and the related $C^*$-algebra $\Bc$ of quantum observables, can be possibly used to define a kind of quantum geometry, which can be  understood as a quantum counterpart of the underlying (pseudo-) Riemannian geometry. Let us note that the results presented in this paper offer another possibility to define quantum geometry CHANGE THIS!!!. Namely, given $x\in\Mc$, denote by $\Bc_x$ the $C^*$-algebra of all bounded operators on $\Hc_x$, and by $\Sc_x$ the set of all states on $\Bc_x$. Since any metric on $\Mc$, that is, a field on the manifold valued in the spaces $\{\Gamma_x\}_{x\in\Mc}$, defines a (classical) geometry on $\Mc$, then it would perhaps make sense to say that a field on $\Mc$ valued in the spaces %$\{\Sc_x\}_{x\in \Mc}$ in
%$\{\Hc_x\}_{x\in \Mc}$, defines a quantum (pseudo-)Riemannian geometry on the manifold. Note that, as shown above, each such field defines an element of $\Sc$. %        the sets $\{\Hc_x\}_{x\in \Mc}$, $\{\Sc_x\}_{x\in \Mc}$ and   $\{\Bc_x\}_{x\in \Mc}$ possess a bundle-like structure and therefore enable to introduce fields valued at the Hilbert spaces, the spaces of states and the algebras. If it turned out to be possible to define differentiation of these fields then one could try to  

Let us finally comment on the homogeneous space $\Gamma$ of all scalar products of signature $(p,p')$ on a vector space $V$. Recall that in the proof of transitivity of the group action \eqref{g-gamma} presented in Section \ref{Gamma-hom}, we used two bases $(v_i)$ and $(v'_i)$ of the vector space $V$. Without loss of generality we can assume that both bases define the same orientation of $V$---if it is not the case then it is enough to make a simple change $v_1\mapsto -v_1$ to obtain bases of the desired property. This means that $g$ in \eqref{g'=gg} can be always chosen to be a linear automorphism of $V$ of {\em positive determinant}. Consequently, if $GL^+(V)$ denotes the Lie group of all linear automorphisms of $V$ of positive determinant, then $(GL^+(V),\Gamma)$ with the action \eqref{g-gamma}, is also a homogeneous space and is isomorphic to $GL^+(\dim V,\R)/SO(p,p')$, where $GL^+(\dim V,\R)$ is the Lie group of all  $\dim V\times\dim V$ real matrices of positive determinant.

\paragraph{Acknowledgment} I am very grateful to Jan Derezi\'nski, Pawe{\l} Kasprzak, Jerzy Kijowski, Pawe{\l} Nurow\-ski, Piotr So{\l}tan and Jacek Tafel for discussions and help. This work was partially supported by the Polish National Science Centre grant No. 2018/30/Q/ST2/ 00811.

%***************************************************
\appendix
%***************************************************

%***************************************************
\section{Space of scalar products is noncompact and connected \label{noncomp}}
%***************************************************

Let $\Gamma$ be the homogeneous space of all scalar products of signature $(p,p')$ on a real vector space $V$. If $(v_i)$ is a basis of $V$, then $\gamma\mapsto\gamma_{ij}(\gamma)=\gamma(v_i,v_j)$ is a smooth\footnote{This is because functions $(\gamma_{ij})_{i\leq j}$ form a global coordinate frame on the real-analytic manifold $\Gamma$ (see Section \ref{Gamma-mfd}).} function on $\Gamma$. Therefore the following map
\begin{equation}
\Gamma\ni \gamma\mapsto \big|\det\big(\gamma_{ij}(\gamma)\big)\big|\in \R
\label{g-det}
\end{equation}
is  continuous. If $\gamma\in\Gamma$ and $\alpha>0$, then $\alpha\gamma\in\Gamma$ again. This simple fact implies that the image of the map \eqref{g-det}, is the set $\R_+$ of all positive real numbers. Since $\R_+$ is noncompact and the map \eqref{g-det} is continuous, the space $\Gamma$ is noncompact.    

We know from the previous section that for any two scalar products $\gamma,\gamma'\in\Gamma$, there exists $g\in GL^+(V)$ such that $\gamma'=g\gamma$. The Lie group $GL^+(V)$ is isomorphic to the Lie group $GL^+(\dim V,\R)$ (of all  $\dim V\times\dim V$ real matrices of positive determinant), and the latter group is  (path) connected \cite{hall}. Thus there exists a continuous curve
\[
[0,1]\ni t\mapsto \xi(t)\in GL^+(V),
\]     
which starts at the identity $e$ of the group, and ends at $g$. The action \eqref{g-gamma} of the group $GL^+(V)$ on $\Gamma$, is an analytic map. Therefore the curve
\[
[0,1]\ni t\mapsto \xi(t)\gamma\in \Gamma
\]  
is continuous, and starts at $\gamma$ and ends at $\gamma'$. This means that $\Gamma$ is (path) connected.

%***************************************************
\section{Construction of invariant measures on spaces of scalar products \label{meas-ex}}
%***************************************************

Let $\Gamma$ be the homogeneous space of all scalar products of signature $(p,p')$ on a real vector space $V$. In this section we will define an invariant metric on $\Gamma$, and then will derive from the metric an explicit expression for an invariant measure on $\Gamma$.

%***************************************************
\subsection{An invariant metric on $\Gamma$}
%***************************************************

Recall that if $(v_i)$ is a basis of $V$, then $\gamma\mapsto\gamma_{ij}(\gamma)=\gamma(v_i,v_j)$ is a smooth function on $\Gamma$. Let $(\gamma^{ij})_{i,j=1,\ldots,\dim V}$ be functions on $\Gamma$ such that
\begin{equation}
\gamma^{ik}\gamma_{kj}=\delta^i{}_j.
\label{gg-del}
\end{equation}
These functions can be used to define the following tensor field on $\Gamma$:
\begin{equation}
Q:=\gamma^{ik}\gamma^{jl}d\gamma_{ij}\ot d\gamma_{kl}.
\label{Q-metric}
\end{equation}
A rather obvious but important observation is that $Q$ does not depend on the choice of the basis $(v_i)$ i.e., if $(\check{v}_i)$ is any other basis of $V$, $\check{\gamma}_{ij}:=\gamma(\check{v}_i,\check{v}_j)$ and  $\check{\gamma}^{ik}\check{\gamma}_{kj}=\delta^i{}_j$, then
\[
Q=\check{\gamma}^{ik}\check{\gamma}^{jl}d\check{\gamma}_{ij}\ot d\check{\gamma}_{kl}.
\]
It follows from \eqref{gg-del} that 
\begin{equation}
(d\gamma^{ik})\gamma_{kj}+\gamma^{ik}d\gamma_{kj}=0
\label{dgg}
\end{equation}
and therefore
\[
Q=\gamma_{ik}\gamma_{jl}\,d\gamma^{ij}\ot d\gamma^{kl}.
\]

Let us prove now that $Q$ is a metric on $\Gamma$. Using obvious symmetricity property $\gamma_{ij}=\gamma_{ji}$ and $\gamma^{ij}=\gamma^{ji}$, it is easy to see that for any vector fields $X,X'$ on $\Gamma$,
\[
Q(X,X')=Q(X',X).       
\]
It remains to show that at every point $\gamma_0\in \Gamma$, the value $Q_{\gamma_0}$ is a nondegenerate form on the tangent space $T_{\gamma_0}\Gamma$. To this end let us assume that the basis $(v_i)$, which defines the functions appearing at the r.h.s. of \eqref{Q-metric}, is an orthonormal basis of $\gamma_0$ such that $\gamma_0(v_i,v_i)=1$ for every $i\leq p$. Then
\[
\gamma^{ij}_0\equiv \gamma^{ij}(\gamma_0)=
\begin{cases}
  1 &\text{if $i=j\leq p$ },\\
  -1 & \text{if $i=j>p$ },\\
  0 & \text{if $i\neq j$}
\end{cases}
\]
and consequently
\begin{equation}
Q_{\gamma_0}=\gamma_0^{ik}\gamma_0^{jl}d\gamma_{ij}\ot d\gamma_{kl}=\sum_{i}(\gamma^{ii}_0)^2d\gamma_{ii}\ot d\gamma_{ii}+2\sum_{i<j}\gamma^{ii}_0\gamma^{jj}_0d\gamma_{ij}\ot d\gamma_{ij}.
\label{Q-diag}
\end{equation}
We see that $Q_{\gamma_0}$ appears above in a diagonal form. Taking into account that the functions $(\gamma_{ij})_{i\leq j}$ form a (global) coordinate frame on $\Gamma$ (see Section \ref{Gamma-hom}), we conclude that $Q_{\gamma_0}$ is nondegenerate. Thus $Q$ is indeed a metric on $\Gamma$.

Since the metric $Q$ is defined by a simple and natural expression \eqref{Q-metric} in terms of natural coordinates on $\Gamma$  we will call it {\em natural metric on} $\Gamma$. However, we are not going to insist that this is the only metric on $\Gamma$, which deserves to be called natural.   

Let us note that the only negative components of $Q_{\gamma_0}$ at the r.h.s. of \eqref{Q-diag}, are those of the form $2\gamma^{ii}_0\gamma^{jj}_0$ for $i<p$ and $j\geq p$. There are $pp'$ of such components. Thus the signature of $Q$ is
\begin{equation}
\Big(\frac{p(p+1)+p'(p'+1)}{2},pp'\Big),
\label{Q-sign}
\end{equation}
and the metric $Q$ is Riemannian if and only if $\Gamma$ consists of either positive or negative definite scalar products.

\begin{lm}
Let $V$ and $V'$ be real vector spaces of the same dimension, and let $\Gamma$ and $ \Gamma'$ be spaces of all scalar products on, respectively, $V$ and $ V'$, of the same signature. Suppose that $l:V\to V'$ is a linear isomorphism and let $l^*:\Gamma'\to\Gamma$ be the corresponding pull-back. If $Q$ is the natural metric on $\Gamma$, then the pull-back $l^{*\star}Q$ of the metric defined by $l^*$ is the natural metric $Q'$ on $\Gamma'$.
\label{lm-Q=Q'}
\end{lm}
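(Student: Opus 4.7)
The plan is to exploit the basis-independence of the natural metric $Q$ already established after \eqref{Q-metric}: we pick a basis on $V$, transport it through $l$ to obtain a preferred basis on $V'$, and then show that in these paired bases the coordinate functions on $\Gamma$ and $\Gamma'$ pull back to one another under $l^*$. Because both $Q$ and $Q'$ are given by the same algebraic expression in these coordinates, the equality $l^{*\star}Q=Q'$ will reduce to the naturality of exterior differentiation and of matrix inversion.

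Concretely, I fix a basis $(v_i)$ of $V$ and set $v'_i:=l(v_i)$, which is a basis of $V'$. Let $(\gamma_{ij})_{i\leq j}$ be the global coordinates on $\Gamma$ given by $(v_i)$, and $(\gamma'_{ij})_{i\leq j}$ those on $\Gamma'$ given by $(v'_i)$. The first step is the key computation
\[
(\gamma_{ij}\circ l^*)(\gamma')=(l^*\gamma')(v_i,v_j)=\gamma'(l(v_i),l(v_j))=\gamma'(v'_i,v'_j)=\gamma'_{ij}(\gamma'),
\]
so $\gamma_{ij}\circ l^*=\gamma'_{ij}$ as functions on $\Gamma'$. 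Applying $d$ gives $l^{*\star}d\gamma_{ij}=d\gamma'_{ij}$.

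Next, since the matrices $(\gamma_{ij}(l^*\gamma'))$ and $(\gamma'_{ij}(\gamma'))$ are equal, their inverses coincide; by \eqref{gg-del} this means $\gamma^{ij}\circ l^*=\gamma'^{ij}$ as functions on $\Gamma'$. Substituting both identities into \eqref{Q-metric},
\[
l^{*\star}Q=(\gamma^{ik}\circ l^*)(\gamma^{jl}\circ l^*)\,(l^{*\star}d\gamma_{ij})\ot(l^{*\star}d\gamma_{kl})=\gamma'^{ik}\gamma'^{jl}\,d\gamma'_{ij}\ot d\gamma'_{kl},
\]
and the right-hand side is by definition the natural metric $Q'$ on $\Gamma'$ (expressed in the basis $(v'_i)$, which, by the basis-independence observation recalled above, agrees with $Q'$ in any basis).

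There is no real obstacle here: the whole argument is structural, turning on the single identity $\gamma_{ij}\circ l^*=\gamma'_{ij}$ that follows immediately from the definition of the pull-back together with the choice $v'_i=l(v_i)$. The only point requiring any care is to avoid committing to a particular basis prematurely: one must invoke the already-proved independence of $Q$ (and of $Q'$) from the choice of basis, so that the equality verified in the paired bases $(v_i),(v'_i)$ is an equality of globally defined tensor fields rather than of coordinate expressions.
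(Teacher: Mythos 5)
Your proof is correct and follows essentially the same route as the paper's: both rest on computing the pull-back $l^{*\star}\gamma_{ij}$ of the coordinate functions (the paper's Equation \eqref{l*-gij}) and substituting into the defining formula \eqref{Q-metric}. The only difference is that you choose the adapted basis $v'_i=l(v_i)$ so that the transition matrix is the identity and then invoke the basis-independence of $Q'$, whereas the paper works with arbitrary bases and lets the factors $l^{m}{}_i$ and $l^{-1i}{}_m$ cancel explicitly; both are valid.
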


\begin{proof}
Let $(\gamma_{ij})$ be functions on $\Gamma$ defined by a basis $(v_i)$ of $V$, and, similarly,  let $(\gamma'_{ij})$ be functions on $\Gamma'$ defined by a basis $(v'_i)$ of $V'$. If $lv_i\in V'$ is the value of the isomorphism $l$ at $v_i\in V$, then
\[
lv_i=l^k{}_iv'_k.
\]

In this setting we have
\begin{equation}
l^{*\star}Q=(l^{*\star}\gamma^{ik})(l^{*\star}\gamma^{jl})(l^{*\star}d\gamma_{ij})\ot(l^{*\star}d\gamma_{kl})=(l^{*\star}\gamma^{ik})(l^{*\star}\gamma^{jl})d(l^{*\star}\gamma_{ij})\ot d(l^{*\star}\gamma_{kl}).
\label{g*-Q}
\end{equation}
Let us fix an arbitrary scalar product $\gamma'_0\in\Gamma'$. Then  
\begin{multline}
(l^{*\star} \gamma_{ij})(\gamma'_0)=(\gamma_{ij})(l^*\gamma'_0)=(l^{*}\gamma'_0)(v_i,v_j)=\gamma'_0(lv_i,lv_j)=\\=\gamma'_0(l^{m}{}_iv'_m,l^{n}{}_jv'_n)=l^{m}{}_i l^{n}{}_j\gamma'_0(v'_m,v'_n)=l^{m}{}_i l^{n}{}_j\gamma'_{mn}(\gamma'_0).
\label{l*-gij}
\end{multline}
This implies that
\begin{align}
l^{*\star} \gamma^{ij}&=l^{-1i}{}_m l^{-1j}{}_n\gamma^{\prime mn}, & d(l^{*\star} \gamma_{ij})=l^{m}{}_i l^{n}{}_jd\gamma'_{mn},
\label{lg,dlg}
\end{align}
since the components $(l^{m}{}_i)$ are constant i.e. they do not depend on $\gamma'_0$. To finish the proof it is enough to set these two results to \eqref{g*-Q}.   
\end{proof}   

Assume now that in Lemma \ref{lm-Q=Q'}, $V=V'$ and $l=g^{-1}\in GL(V)$. Then by virtue of \eqref{bg-g*} and the lemma
\begin{equation}
\bar{g}^\star Q=Q,
\label{g*Q=Q}
\end{equation}
for every $g\in GL(V)$. This result means that the natural metric $Q$ is {\em invariant} with respect to the action \eqref{g-gamma} of the group $GL(V)$ on $\Gamma$. 
   
Suppose now that $\Gamma$ and $\Gamma'$ are the spaces of  all scalar products on $V$ of signature, respectively, $(p,p')$ and $(p',p)$. Then the map
\begin{equation}
\Gamma'\ni \gamma\mapsto \theta(\gamma):=-\gamma\in \Gamma,
\label{theta}
\end{equation}
is an isomorphism of the homogeneous spaces $(GL(V),\Gamma')$ and $(GL(V),\Gamma)$. It is straightforward to check that if $Q$ and $Q'$ are the natural metrics on, respectively, $\Gamma$ and $\Gamma'$, then
\[
\theta^\star Q=Q'.
\]

%***************************************************
\subsection{An invariant measure on $\Gamma$ defined by the natural metric}
%***************************************************

%***************************************************
\subsubsection{Definition}
%***************************************************

Functions $(\gamma_{ij})_{i\leq j}$ given by a basis $(v_i)$ of $V$, form a global coordinate frame on $\Gamma$ (see Section \ref{Gamma-hom}). Let us label the coordinates by a single index $I$, i.e. let $(\gamma_{ij})_{i\leq j}=(\gamma^I)_{I=1,\ldots,N}$, where $N=\dim \Gamma$. Denote by $\phi:\Gamma\to\R^N$ the map associated with the coordinates $(\gamma^I)$:
\begin{equation}
\Gamma\ni\gamma\mapsto \phi(\gamma):=\big(\gamma^I(\gamma)\big)\in\R^N.
\label{phi}
\end{equation}
Let $(Q_{IJ})$ be components of the natural metric $Q$ on $\Gamma$ in the coordinate frame $(\gamma^I)$.           

Recall that $\Gamma$ is a locally compact Hausdorff space. The following formula defines a linear functional on $C_c(\Gamma)$: 
\begin{equation}
C_c(\Gamma)\ni f\mapsto \omega(f):=\int_{\phi(\Gamma)}\phi^{-1\star}(f\sqrt{|\det{Q_{IJ}}|})\,d\mu_L\in \R
\label{nat-dmu},
\end{equation}  
where $d\mu_L$ denotes the Lebesgue measure on $\R^N$. The functional is well defined, since $\phi^{-1\star}(f\sqrt{|\det{Q_{IJ}}|})$ is a continuous function of compact support defined on $\phi(\Gamma)$, being an open subset of $\R^N$. Moreover, if a function $f\in C_c(\Gamma)$ is non-negative everywhere, then $\phi^{-1\star}(f\sqrt{|\det{Q_{IJ}}|})$ is non-negative everywhere and, consequently, $\omega(f)$ is non-negative. Thus $\omega$ is a positive functional. Now, the Riesz representation theorem \cite{cohn} guarantees that there exists a unique regular Borel measure $d\mu_{Q}$ on $\Gamma$ such that for every $f\in C_c(\Gamma)$,   
\[
\omega(f)=\int_{\Gamma}f\,d\mu_{Q}.
\]
Using a common convention, the measure $d\mu_{Q}$ can be expressed in terms of the coordinates $(\gamma^I)$ as follows:
\[
d\mu_{Q}=\sqrt{|\det{Q_{IJ}}|}\,d\gamma^1d\gamma^2\ldots\, d\gamma^N.
\]  
Just for convenience we will call $d\mu_{Q}$ {\em natural measure on} $\Gamma$. %, again without insisting that this is the only measure on $\Gamma$, which deserves to be called natural.  

Let us show now that the measure $d\mu_{Q}$ does not depend on the choice of the basis $(v_i)$ of $V$, which defines the coordinates $(\gamma^I)$. To this end we will apply the standard theorem concerning change of variables in a Lebesgue integral \cite{meas-thr}:   
\begin{thr}
  Suppose that $T$ is an open subset of $\R^N$ and a map $\Phi:T\to\R^N$ is bijective and continuously differentiable. Then for any measurable set $W\subset T$ and any Borel function $F\in L^1(\R^N)$, one has the equality
\[
\int_{W} F\,d\mu_L=\int_{\Phi^{-1}(W)} (\Phi^\star F)\, |\det \Phi'|\,d\mu_L,
\]    
where $\Phi'$ is the derivative of $\Phi$.  
\label{ch-var-L}
\end{thr}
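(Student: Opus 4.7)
The plan is to prove this classical change-of-variables formula via the standard four-step reduction: first the linear case, then local linearization for $C^1$ diffeomorphisms at the level of sets, then extension from indicator functions to general Borel integrands.

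\emph{Linear case.} For $A\in GL(N,\R)$ I would show $\mu_L(A(B))=|\det A|\,\mu_L(B)$ for every Borel $B\subset\R^N$. The assignment $B\mapsto \mu_L(A^{-1}B)$ is a translation-invariant Radon measure on $\R^N$, so by uniqueness of Haar measure on $(\R^N,+)$ it equals $c(A)\mu_L$ with $c(A)>0$. The map $A\mapsto c(A)$ is a group homomorphism into $(\R_+,\cdot)$; evaluating it on diagonal matrices, on orthogonal matrices (where $c=1$ because the unit ball is preserved), and on elementary upper-triangular shears—equivalently, using the polar or $LU$ decomposition—forces $c(A)=|\det A|$.

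\emph{Diffeomorphism case, set form.} Next I would prove that for every compact $K\subset T$,
\[
\mu_L(\Phi(K))=\int_K |\det \Phi'(x)|\,d\mu_L(x).
\]
Fix $\eps>0$ and a compact set $K'$ with $K\subset K'\subset T$. Uniform continuity of $\Phi'$ and of $x\mapsto |\det\Phi'(x)|$ on $K'$ lets me cover $K$ by finitely many closed balls $B_i=\overline{B}(x_i,r_i)$ so small that on each $B_i$ the map $\Phi$ is well-approximated by its affine linearization $L_i(x):=\Phi(x_i)+\Phi'(x_i)(x-x_i)$, in the precise sense that $L_i((1-\eps)B_i)\subset \Phi(B_i)\subset L_i((1+\eps)B_i)$. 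The linear case gives $\mu_L(L_i(B_i))=|\det\Phi'(x_i)|\,\mu_L(B_i)$, and a Vitali-type almost-disjoint refinement keeps overlap contributions of order $\eps$. Summing over the cover and using continuity of $|\det\Phi'|$ yields the displayed identity up to $O(\eps)$; letting $\eps\to 0$ and invoking inner regularity of $\mu_L$ extends it to every measurable $W\subset T$.

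\emph{From sets to functions.} The set identity is precisely the theorem for $F=\mathbf{1}_W$ (bijectivity of $\Phi$ is what allows one to pass between the ``image'' and ``preimage'' sides of the formula). By linearity it extends to non-negative simple Borel functions, by monotone convergence to arbitrary non-negative Borel $F$, and by the decomposition $F=F^+-F^-$ to all $F\in L^1(\R^N)$.

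The main obstacle is the second step: matching the nonlinear $\Phi$ to its local affine approximations with a uniform error across the whole cover. The density $|\det\Phi'|$ is continuous so its side behaves well, but one must simultaneously verify that $\mu_L(\Phi(B_i))$ and $|\det\Phi'(x_i)|\mu_L(B_i)$ agree to leading order and that the geometric overlap errors from the Vitali cover can be absorbed into the $O(\eps)$ remainder. Everything else—the linear case and the measure-theoretic bootstrap from $\mathbf{1}_W$ to $L^1$—is essentially bookkeeping.
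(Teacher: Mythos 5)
The paper does not actually prove Theorem \ref{ch-var-L}: it is imported verbatim from Bogachev's \emph{Measure Theory} \cite{meas-thr} and used as a black box (to show that the natural measure $d\mu_Q$ is basis-independent, and again in Lemma \ref{lm-dmu=dmu'}). So there is no in-paper argument to compare yours against, and I assess your outline on its own terms. Its architecture --- the linear case via uniqueness of Haar measure on $(\R^N,+)$ together with a factorization into orthogonal, diagonal and shear matrices; a local affine approximation for the set identity; then the bootstrap from indicators through simple functions and monotone convergence to $F=F^+-F^-$ --- is the standard textbook proof, and those pieces are sound.

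There is, however, a genuine gap in your second step, beyond the ``bookkeeping'' you acknowledge. The theorem assumes only that $\Phi$ is bijective and $C^1$; it does \emph{not} assume $\det\Phi'\neq 0$. Your sandwich $L_i\big((1-\eps)B_i\big)\subset\Phi(B_i)\subset L_i\big((1+\eps)B_i\big)$ presupposes that $\Phi'(x_i)$ is invertible, and the admissible radius $r_i$ degrades with $\|\Phi'(x_i)^{-1}\|$, so both the inclusions and the uniformity of the cover collapse near the critical set $Z=\{x\in T:\det\Phi'(x)=0\}$. A complete proof must treat $Z$ separately: one shows $\mu_L(\Phi(Z\cap K))=0$ (the equal-dimension, $C^1$ case of Sard's lemma, obtained by covering $Z$ with small cubes whose images lie in thin slabs), after which both sides of the set identity get zero contribution from a neighbourhood $\{|\det\Phi'|<\delta\}$ as $\delta\to 0$, and your linearization argument runs on the complement where $\|\Phi'(x)^{-1}\|$ is uniformly bounded. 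A smaller omission: for merely Lebesgue-measurable $W$ you need $\Phi(W)$ (equivalently $\Phi^{-1}(W)$ on the other side) to be measurable, which follows because a $C^1$ map is locally Lipschitz and hence carries null sets to null sets and $F_\sigma$ sets to $F_\sigma$ sets. None of these degeneracies arise in the paper's actual use of the theorem, where $\Phi$ is affine with invertible linear part, but a proof of the statement as written must address them.
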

Let $(\check{\gamma}^I)$ be coordinates on $\Gamma$ given by a basis $(\check{v}_i)$ of $V$, and let $\check{\phi}:\Gamma\to\R^N$ be the corresponding map \eqref{phi}. The transition function
\[
\Phi=\phi\circ\check{\phi}^{-1}:\check{\phi}(\Gamma)\to\phi(\Gamma), 
\]
is of the following form:
\[
\big(\gamma^I(\gamma_0)\big)=\Phi\big(\check{\gamma}^J(\gamma_0)\big)=\big(\Phi^I{}_J\check{\gamma}^J(\gamma_0)\big),
\]
where $\Phi^I{}_J$ are constants. Consequently, 
\[
\det\Phi'=\det(\Phi^I{}_J)
\]
is a constant as well. On the other hand, 
\[
\check{Q}_{IJ}=\Phi^{M}{}_I\Phi^{N}{}_J{Q}_{MN},
\]
where $(\check{Q}_{IJ})$ are components of $Q$ in the coordinate frame $(\check{\gamma}^I)$. This means that
\[
\sqrt{|\det\check{Q}_{IJ}|}=|\det \Phi^I{}_J|\sqrt{|\det{Q}_{IJ}|}.
\] 
Thus by virtue of Theorem \eqref{ch-var-L}, for every $f\in C_c(\Gamma)$, 
\begin{multline*}
\int_{\Gamma}f\,d\mu_{Q}=\int_{\phi(\Gamma)}\phi^{-1\star}(f\sqrt{|\det{Q_{IJ}}|})\,d\mu_L=\\=\int_{\Phi^{-1}(\phi(\Gamma))}\Phi^\star\big(\phi^{-1\star}(f\sqrt{|\det{Q_{IJ}}|})\big)|\det \Phi^I{}_J|\,d\mu_L=\int_{\check{\phi}(\Gamma)}\check{\phi}^{-1\star}(f\sqrt{|\det{\check{Q}_{IJ}}|})\,d\mu_L.
\end{multline*} 
We conclude that, indeed, the natural measure $d\mu_{Q}$ does not depend on the choice of the basis $(v_i)$, which determines the coordinates $(\gamma^I)$ and the components $(Q_{IJ})$, appearing in \eqref{nat-dmu}.   

%***************************************************
\subsubsection{Invariance}
%***************************************************

\begin{lm}
Let $V$ and $V'$ be real vector spaces of the same dimension, and let $\Gamma$ and $ \Gamma'$ be spaces of all scalar products on, respectively, $V$ and $ V'$, of the same signature. Suppose that $l:V\to V'$ is a linear isomorphism and let $l^*:\Gamma'\to\Gamma$ be the corresponding pull-back. If $d\mu'_{Q'}$ is the natural measure on $\Gamma'$, then the pushforward measure $l^{*}_{\star}d\mu'_{Q'}$ is the natural measure $d\mu_{Q}$ on $\Gamma$.
\label{lm-dmu=dmu'}
\end{lm}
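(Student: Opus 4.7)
The plan is to reduce everything to matched coordinate frames so that the pull-back $l^*$ becomes the identity on $\R^N$ in coordinates, and then to invoke Lemma~\ref{lm-Q=Q'} to handle the metric-determinant factor in the definition of the natural measure.

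First, I would choose a basis $(v_i)$ of $V$ and take as basis of $V'$ the image $(v'_i):=(lv_i)$. Let $(\gamma^I)=(\gamma_{ij})_{i\leq j}$ and $(\gamma'^I)=(\gamma'_{ij})_{i\leq j}$ be the corresponding global coordinates on $\Gamma$ and $\Gamma'$, and let $\phi:\Gamma\to\R^N$, $\phi':\Gamma'\to\R^N$ be the associated charts. Because $l^m{}_i=\delta^m{}_i$ in these paired bases, Equation~\eqref{l*-gij} gives $l^{*\star}\gamma_{ij}=\gamma'_{ij}$, which is the statement that $\phi\circ l^*\circ\phi'^{-1}$ is the identity on its natural domain; in particular $\phi'(\Gamma')=\phi(\Gamma)$ and $l^*\circ\phi'^{-1}=\phi^{-1}$.

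Next I would invoke Lemma~\ref{lm-Q=Q'}: $l^{*\star}Q=Q'$. Reading this in components in the matched frames gives $Q'_{IJ}(\gamma')=Q_{IJ}(l^*\gamma')$, so $\sqrt{|\det Q'_{IJ}|}=l^{*\star}\sqrt{|\det Q_{IJ}|}$. Combined with the preceding observation, for every $f\in C_c(\Gamma)$ the chain
\begin{align*}
\int_\Gamma f\, d(l^*_\star d\mu'_{Q'})
&= \int_{\Gamma'} (l^{*\star}f)\,d\mu'_{Q'} \\
&= \int_{\phi'(\Gamma')}\phi'^{-1\star}\bigl((l^{*\star}f)\sqrt{|\det Q'_{IJ}|}\bigr)\,d\mu_L \\
&= \int_{\phi(\Gamma)}\phi^{-1\star}\bigl(f\sqrt{|\det Q_{IJ}|}\bigr)\,d\mu_L
= \int_\Gamma f\, d\mu_Q
\end{align*}
holds, where the first equality is the definition of the pushforward, the second is the definition~\eqref{nat-dmu} of $d\mu'_{Q'}$, and the third uses the matched-coordinate identifications $\phi'(\Gamma')=\phi(\Gamma)$ and $\phi'^{-1\star}\circ l^{*\star}=\phi^{-1\star}$. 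Uniqueness in the Riesz representation theorem then forces $l^*_\star d\mu'_{Q'}=d\mu_Q$. The basis-independence of $d\mu_Q$ established in the previous subsection justifies that working with this particular paired-basis choice is without loss of generality.

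The main obstacle is only the coordinate bookkeeping needed to justify the third equality in the displayed chain; once unpacked, there is no genuine analytic difficulty, because the paired-basis trick collapses $l^*$ to the identity on $\R^N$ and the metric-determinant factor transforms correctly by Lemma~\ref{lm-Q=Q'}. An alternative route would be to argue abstractly that the natural measure is the (pseudo-)Riemannian volume associated with $Q$, which is automatically invariant under the isometric diffeomorphism $l^*:(\Gamma',Q')\to(\Gamma,Q)$; since $d\mu_Q$ is defined here via the coordinate formula~\eqref{nat-dmu} rather than intrinsically, the coordinate argument above is the more direct one to pursue.
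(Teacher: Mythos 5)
Your proof is correct. It follows the same overall strategy as the paper's --- express both measures through the coordinate formula \eqref{nat-dmu}, use Lemma \ref{lm-Q=Q'} to transform the factor $\sqrt{|\det Q_{IJ}|}$, and conclude by the uniqueness part of the Riesz representation theorem --- but you organize the determinant bookkeeping differently. The paper works with arbitrary bases of $V$ and $V'$, so the transition map $\Phi=\phi'\circ l^{-1*}\circ\phi^{-1}$ is a nontrivial linear map with constant Jacobian $(\det(L^I{}_J))^{-1}$; it then invokes the change-of-variables Theorem \ref{ch-var-L} and lets this Jacobian cancel the factor $|\det(L^M{}_N)|$ arising from $\sqrt{|\det Q'_{MN}|}=|\det(L^M{}_N)|\,l^{*\star}\sqrt{|\det Q_{IJ}|}$. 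You instead choose the matched basis $(v'_i)=(lv_i)$, which forces $l^m{}_i=\delta^m{}_i$, hence $L^I{}_J=\delta^I{}_J$, collapses $\phi\circ l^*\circ\phi'^{-1}$ to the identity, and makes the change of variables trivial; the price is that you must appeal to the basis-independence of the natural measure (established in the paper immediately before this lemma, itself via Theorem \ref{ch-var-L}) to justify that computing $d\mu_Q$ and $d\mu'_{Q'}$ in this particular pair of frames is without loss of generality --- which you do state explicitly. The two arguments are logically equivalent: yours delegates the Jacobian cancellation to the already-proven basis-independence result and is slightly leaner, while the paper's is self-contained for arbitrary frames. Both are complete.
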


\begin{proof}
Let $(\gamma^I)\equiv(\gamma_{ij})_{i\leq j}$ be global coordinates on $\Gamma$ defined by a basis $(v_i)$ of $V$, and, similarly,  let $(\gamma^{\prime I})\equiv(\gamma'_{ij})_{i\leq j}$ be global coordinates on $\Gamma'$ defined by a basis $(v'_i)$ of $V'$. Denote by $\phi':\Gamma'\to \R^N$ the map, defined by the coordinates $(\gamma^{\prime I})$ analogously to the map $\phi$ given by \eqref{phi}.

Consider now a map $\Phi:\phi(\Gamma)\to\phi'(\Gamma')$ defined as
\begin{equation}
\Phi:=\phi'\circ l^{-1*}\circ \phi^{-1}.
\label{Phi-2}
\end{equation}
We have
\[
\Phi^{-1}\big(\gamma^{\prime J}(\gamma'_0)\big)=\phi(l^{*}\gamma'_0)=\big(\gamma^{I}(l^{*}\gamma'_0)\big)=\big((l^{*\star}\gamma^{I})(\gamma'_0)\big).
\]
It follows from \eqref{l*-gij} that there exist an invertible matrix $(L^J{}_I)$ of constant components such that
\begin{equation}
l^{*\star}\gamma^{J}=L^J{}_I\gamma^{\prime I}
\label{l*-gI}
\end{equation}
and hence
\[
\Phi^{-1}\big(\gamma^{\prime J}(\gamma'_0)\big)=\big(L^J{}_I\gamma^{\prime I}(\gamma'_0)\big).
\]
Consequently,
\begin{equation}
\det \Phi'=(\det(L^I{}_J))^{-1}
\label{det-Phi}
\end{equation}
is constant as well.

Let $Q$ and $Q'$ be the natural metrics on, respectively, $\Gamma$ and  $\Gamma'$. By virtue of Lemma \ref{lm-Q=Q'}  
\begin{multline*}
Q'=l^{*\star}Q=l^{*\star}(Q_{IJ}\,d\gamma^I\ot d\gamma^J)=(l^{*\star}Q_{IJ})\,d(l^{*\star}\gamma^I)\ot d(l^{*\star}\gamma^J)=\\=(l^{*\star}Q_{IJ})L^I{}_ML^J{}_N\,d\gamma^{\prime M}\ot d\gamma^{\prime N},
\end{multline*}
where in the last step we used \eqref{l*-gI}. Consequently, components $Q'_{MN}$ of $Q'$ in the coordinate frame $(\gamma^{\prime I})$ can be expressed as follows:
\[
Q'_{MN}=(l^{*\star}Q_{IJ})L^I{}_ML^J{}_N.
%\label{Q'MN}
\]
This implies that
\begin{equation}
\sqrt{|\det Q'_{MN}|}=|\det(L^M{}_N)|\,l^{*\star} \sqrt{|\det Q_{IJ}|}.
\label{det-det}
\end{equation}

Now to finish the proof it is enough to express the pushforward measure $l^*_\star d\mu'_{Q'}$ in terms of the components $(Q_{MN})$ and the map $\phi$: for every $f\in C_c(\Gamma)$, 
\begin{multline*}
\int_{\Gamma}f \,(l^*_\star d\mu'_{Q'})=\int_{\Gamma'}(l^{*\star}f)\,d\mu'_{Q'}=\int_{\phi'(\Gamma')}\phi^{\prime -1 \star}\big((l^{*\star}f)\sqrt{|\det Q'_{MN}|}\big)\,d\mu_L=\\=\int_{\phi'(\Gamma')}\phi^{\prime -1 \star}\big(l^{*\star}(f\sqrt{|\det Q_{IJ}|})\big)|\det(L^M{}_N)|\,d\mu_L,
\end{multline*}
where in the last step we used \eqref{det-det}. Let us now change variables in the last integral by means of the map \eqref{Phi-2}---using Theorem \ref{ch-var-L} and Equation \eqref{det-Phi} we obtain
\[
\int_{\Gamma}f \,(l^*_\star d\mu'_{ Q'})=\int_{\phi(\Gamma)}\phi^{-1\star}(f\sqrt{|\det{Q_{IJ}}|})\,d\mu_L=\int_{\Gamma}f\,d\mu_{ Q}.
\]
Thus, indeed, $l^*_\star d\mu'_{ Q'}=d\mu_{ Q}$. 
\end{proof}

Assume now that in Lemma \ref{lm-dmu=dmu'}, $V=V'$ and $l=g^{-1}\in GL(V)$. Then by virtue of Equation \eqref{bg-g*} and the lemma
\begin{equation}
\bar{g}_\star d\mu_{ Q}=d\mu_{ Q},
\label{g*dmu=dmu}
\end{equation}
for every $g\in GL(V)$. This result means that the natural measure $d\mu_{ Q}$ is an {\em invariant} measure on the homogeneous space $\Gamma$.

%***************************************************
\subsubsection{Remarks}
%***************************************************

Suppose now that $\Gamma$ and $\Gamma'$ are the spaces of  all scalar products on $V$ of signature, respectively, $(p,p')$ and $(p',p)$. It is not difficult to check that if $d\mu_{ Q}$ and $d\mu'_{ Q'}$ are the natural measures on, respectively, $\Gamma$ and $\Gamma'$, then
\[
\theta_\star d\mu'_{ Q'}=d\mu_{ Q},
\]
where $\theta$ is the isomorphism \eqref{theta} of the homogeneous spaces $\Gamma$ and $\Gamma'$.

Let us note that given dimension of the vector space $V$, the components of the natural metric $ Q$ in a coordinate frame $(\gamma_{ij})_{i\leq j}$, expressed in terms of the corresponding functions $(\gamma^{ij})_{i\leq j}$, are of the same form for every signature. Consequently, the natural measure $d\mu_{ Q}$, when expressed by means of $(\gamma_{ij})_{i\leq j}$ and  $(\gamma^{ij})_{i\leq j}$, is of the same form for every signature $(p,p')$ such that $p+p'=\dim V$. For example, for signatures $(1,0)$ and $(0,1)$, the natural measure reads
\[
d\mu_{Q}=|\gamma^{11}|\,d \gamma_{11}=\frac{1}{|\gamma_{11}|}\,d\gamma_{11},
\]
and for signatures $(2,0)$, $(1,1)$ and $(0,2)$,
\[
d\mu_{ Q}=\sqrt{2\big|(\gamma^{11})^3(\gamma^{22})^3+3\gamma^{11}\gamma^{22}(\gamma^{12})^4-3(\gamma^{11})^2(\gamma^{22})^2(\gamma^{12})^2-(\gamma^{12})^6\big|}\,d\gamma_{11}d\gamma_{22}d\gamma_{12}.
\]        

Our last remark concerns the application of the natural measure $d\mu_{ Q}$, to the construction of quantum states over a set $\Qc(\Mc)$ of metrics, defined on a manifold $\Mc$. Suppose that the measure $d\mu_{x_0}$, which generates the measure field \eqref{diff-inv}, is chosen to be the natural measure on $\Gamma_{x_0}$. It follows from Lemma \ref{lm-dmu=dmu'} that then for every $x\in\Mc$, the measure $d\mu_x$ given by \eqref{diff-inv}, is the natural measure on $\Gamma_x$.

%***************************************************
\subsection{Other invariant metrics on $\Gamma$ }
%***************************************************

Proposition \ref{E!} ensures that an invariant measure on $\Gamma$ is unique up to a positive multiplicative constant. One may wonder whether the natural metric $d\mu_{Q}$ on $\Gamma$, being an invariant metric, is also unique up to a multiplicative constant. Here we will briefly describe a counterexample to this conjecture.   

Let us consider the following one-form on $\Gamma$: 
\[
\alpha:=\gamma^{ij}d\gamma_{ij}=-\gamma_{ij}d\gamma^{ij},
\]
where the last equality holds by virtue of Equation \eqref{dgg}. It is clear, that if the functions $(\check{\gamma}_{ij})$ and $(\check{\gamma}^{ij})$ are defined by a basis $(\check{v}_i)$ of $V$, then
\[
\alpha:=\check{\gamma}^{ij}d\check{\gamma}_{ij}=-\check{\gamma}_{ij}d\check{\gamma}^{ij}.
\]    
We will call $\alpha$ {\em natural one-form on} $\Gamma$. Using \eqref{lg,dlg} it is easy to show that the natural one-form is invariant with respect to the action of $GL(V)$ on $\Gamma$: 
\[
\bar{g}^{\star}\alpha=\alpha,
\]   
for every $g\in GL(V)$. 

For any real number $a$ we define the following tensor field on $\Gamma$: 
\[
Q^a:=Q+a\,\alpha\ot\alpha,
\]
where $Q$ is the natural metric on $\Gamma$. The invariance of the natural metric $Q$ and of the natural one-form $\alpha$ imply that $Q^a$ is a tensor field invariant with respect to the action of the group $GL(V)$ on $\Gamma$: for every $g\in GL(V)$,
\[
\bar{g}^{\star}Q^a=Q^a.
\] 

Let $Q^{-1}$ denote the metric ``inverse'' to the natural metric $Q$, i.e., $Q^{-1}$ is the tensor field on $\Gamma$ of type $\binom{2}{0}$ such that its components $(Q^{IJ})$ in any coordinate frame satisfy $Q^{IK}Q_{KJ}=\delta^I{}_J$, where $(Q_{IJ})$ are components of $Q$ in the same frame. The invariance of $Q$ implies that $Q^{-1}$ is also invariant with respect to the action of the group $GL(V)$. Since $\alpha$ is also invariant, the function $Q^{-1}(\alpha,\alpha)$ on $\Gamma$ is constant. It turns out that its value equals $\dim V$. 

Let
\[
a_0\equiv-\frac{1}{Q^{-1}(\alpha,\alpha)}<0.
\]
It is possible to show (after some calculations) that if $a\neq a_0$, then $Q^a$ is a metric on $\Gamma$. If $a=a_0$, then $Q^a$ is a ``degenerate metric'' (the vector field $\vec{\alpha}:=Q^{-1}(\alpha,\cdot)$ on $\Gamma$ is non-zero everywhere, but if $a=a_0$, then $Q^a(\vec{\alpha},X)=0$ everywhere for every vector field $X$ on $\Gamma$). If $a>a_0$, then the signature of $Q^a$ coincides with that of $Q$ (see \eqref{Q-sign}), if $a<a_0$, then the signature of $Q^a$ reads
\[
\Big(\frac{p(p+1)+p'(p'+1)}{2}-1,pp'+1\Big).
\]        

%***************************************************
\section{Separability of $L^2$ spaces defined by invariant measures \label{H-sep}}
%***************************************************

The Hilbert space $L^2(\Gamma,d\mu)$, defined by any invariant (regular Borel) measure $d\mu$ on a homogeneous space $\Gamma$ of scalar products, is {\em separable}. To justify this statement, recall that $\Gamma$ is homeomorphic to an open subset of some $\R^N$. Therefore, $\Gamma$ is {\em second-countable} (i.e. there exists a countable basis for its topology). Now, separability of $L^2(\Gamma,d\mu)$ is a consequence of the following two propositions (see, respectively, \cite{cohn} and \cite{spectr-qm}):  

\begin{prop}
Let $Y$ be a locally compact Hausdorff space that has a countable basis for its topology. Then every regular measure on $Y$  is $\sigma$-finite (i.e. $Y$ is a union of a countable number of subsets of finite measure).
\end{prop}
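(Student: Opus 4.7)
The plan is to exploit second-countability to extract a countable cover of $Y$ by open sets whose closures are compact, and then invoke the fact (built into the definition of a regular Borel measure on a locally compact Hausdorff space in the convention of \cite{cohn}) that compact sets have finite measure.

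First I would show that in a locally compact Hausdorff space every point $y \in Y$ admits an open neighborhood with compact closure. This is standard: by local compactness $y$ has some compact neighborhood $K_y$, and in a Hausdorff space $K_y$ is closed, so the interior $\mathrm{int}(K_y)$ is an open neighborhood of $y$ whose closure lies inside $K_y$ and is therefore compact.

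Second, let $\{B_n\}_{n\in\N}$ be a countable basis for the topology of $Y$, and set
\[
\mathcal{B}':=\{\, B_n \ | \ \overline{B_n} \text{ is compact}\,\}.
\]
For any $y\in Y$, the open set $\mathrm{int}(K_y)$ from the previous step contains some basis element $B_{n(y)}$ with $y\in B_{n(y)}\subset \mathrm{int}(K_y)$, and then $\overline{B_{n(y)}}\subset K_y$ is compact, so $B_{n(y)}\in \mathcal{B}'$. Thus $\mathcal{B}'$ is a countable cover of $Y$ by open sets with compact closures.

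Third, since every regular Borel measure $d\mu$ on $Y$ assigns finite measure to compact sets, each $B_n\in\mathcal{B}'$ satisfies $\mu(B_n)\leq \mu(\overline{B_n})<\infty$. Writing $Y=\bigcup_{B_n\in\mathcal{B}'} B_n$ exhibits $Y$ as a countable union of sets of finite measure, which is the desired $\sigma$-finiteness. The only subtle point is the appeal to the convention that a "regular measure" on a locally compact Hausdorff space includes local finiteness on compacta; once that convention is in place, the argument is elementary and relies purely on the topological separation and countability properties of $Y$.
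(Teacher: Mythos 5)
Your proof is correct. Note that the paper itself offers no proof of this proposition---it is quoted as a known result from Cohn's \emph{Measure theory} and used as a black box---so there is no in-paper argument to compare against; your proof is the standard one and all the steps check out. The two ingredients you isolate are exactly the right ones: first, in a locally compact Hausdorff space every point has an open neighborhood with compact closure (your argument via the interior of a compact neighborhood, using that compact subsets of Hausdorff spaces are closed, is fine), so second countability yields a countable cover of $Y$ by basis elements with compact closures; second, in the convention of \cite{cohn} a regular Borel measure on a locally compact Hausdorff space is by definition finite on compact sets, so each such basis element has finite measure, and $Y$ is exhibited as a countable union of finite-measure sets. The one subtlety, which you flag explicitly and correctly, is that finiteness on compacta must indeed be built into the notion of ``regular measure'' being used; since that is the convention adopted in the paper's reference, the appeal is legitimate and the argument is complete.
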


\begin{prop}
Every $\sigma$-finite Borel measure on a second-countable topological space produces a separable $L^2$ space.
\end{prop}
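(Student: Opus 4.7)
The plan is to produce an explicit countable dense subset of $L^2(Y,\mu)$ from the topological basis and the $\sigma$-finite decomposition. First I would reduce the problem to approximating characteristic functions: it is a standard fact that the linear span of $\{\chi_B : B\text{ Borel},\ \mu(B)<\infty\}$ is dense in $L^2(Y,\mu)$ (approximate $f\in L^2$ by simple functions supported on finite-measure sets). So it suffices to show that, for every such $B$ and every $\epsilon>0$, there is a set $A$ in some fixed countable algebra with $\|\chi_A-\chi_B\|_{L^2}^2=\mu(A\triangle B)<\epsilon$; rational linear combinations of such $\chi_A$ then form a countable dense subset.

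Next I would build the countable algebra. Fix a countable topological basis $\{U_n\}_{n\in\N}$ of $Y$ and a $\sigma$-finite decomposition $Y=\bigcup_k Y_k$ with $\mu(Y_k)<\infty$, chosen (by replacing $Y_k$ with $Y_1\cup\ldots\cup Y_k$) to be increasing. Let $\mathcal{A}$ be the collection of finite Boolean combinations of sets $U_n\cap Y_k$; $\mathcal{A}$ is a countable algebra of sets of finite measure, and because $Y$ is second-countable every open set is a countable union of $U_n$'s, so the $\sigma$-algebra generated by $\mathcal{A}$ is the full Borel $\sigma$-algebra.

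The main work is step three: the measure-theoretic approximation. I would apply the standard ``approximation of Borel sets by an algebra'' lemma, proved via the monotone class theorem (or equivalently the $\pi$--$\lambda$ theorem): the class
\[
\mathcal{D}:=\{\,B\text{ Borel in some } Y_k \,\mid\, \forall \epsilon>0\ \exists A\in\mathcal{A}\text{ with }\mu(A\triangle B)<\epsilon\,\}
\]
is a monotone class containing $\mathcal{A}\cap 2^{Y_k}$ (finiteness of $\mu$ on $Y_k$ makes symmetric-difference limits behave), so $\mathcal{D}$ equals all Borel subsets of $Y_k$. For a general finite-measure Borel set $B$, write $B=\bigcup_k(B\cap Y_k)$; $\sigma$-additivity yields $\mu(B\setminus(B\cap Y_K))<\epsilon/2$ for $K$ large, and then one approximates $B\cap Y_K$ within $\mathcal{A}$ to within $\epsilon/2$.

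The hard part is the monotone class argument, but it is standard and only uses $\sigma$-additivity and finiteness of $\mu$ on each $Y_k$; the rest of the proof is bookkeeping. Putting these pieces together, $\{\sum_{i=1}^N q_i\chi_{A_i}:N\in\N,\ q_i\in\mathbb{Q},\ A_i\in\mathcal{A}\}$ is a countable set, dense in $L^2(Y,\mu)$, which establishes separability.
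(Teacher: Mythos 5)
Your argument is correct in substance and is the standard proof of this fact; note that the paper itself does not prove this proposition at all---it is quoted verbatim from the reference [Moretti, \emph{Spectral Theory and Quantum Mechanics}] and used as a black box, so there is no in-paper proof to compare against. One small piece of bookkeeping to tidy up: as stated, $\mathcal{A}$ (all finite Boolean combinations of the sets $U_n\cap Y_k$) is an algebra on $Y$ and therefore contains $Y$ itself and other complements of possibly infinite measure, so the claim that $\mathcal{A}$ is ``a countable algebra of sets of finite measure'' is not literally consistent when $\mu(Y)=\infty$, and $\chi_A$ for such $A$ does not lie in $L^2$. The fix is trivial and already implicit in your step three: work with the trace algebras $\mathcal{A}_k:=\{A\cap Y_k : A\in\mathcal{A}\}$, each of which consists of finite-measure sets and generates the Borel $\sigma$-algebra of $Y_k$, run the monotone class (or $\pi$--$\lambda$) approximation there, and take the final countable dense set to be rational combinations of $\chi_A$ with $A\in\bigcup_k\mathcal{A}_k$. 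With that adjustment the reduction to characteristic functions, the approximation lemma, and the truncation $B\cap Y_K\uparrow B$ all go through exactly as you describe, and separability follows.
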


\noindent Thus for any manifold $\Mc$, all Hilbert spaces $\{\Hc_x\}_{x\in\Mc}$ given by any diffeomorphism invariant field  \eqref{diff-inv} of invariant measures, are separable. %is generated by the natural measure on $\Gamma_{x_0}$, then the resulting Hilbert space $\Hc_\la$ given by \eqref{H-la}, is separable for every $\la\in\Lambda$.     

%***************************************************
\section{Proof of Lemma \ref{k-sur} \label{proof}}
%***************************************************

Assume that a set $\Qc(\Mc)$ of all metrics of signature $(p,p')$ on a manifold $\Mc$, is non-empty. Let us fix an element $\qb\in \Qc(\Mc)$, a point $\bar{x}\in\Mc$ and a scalar product $\gamma\in \Gamma_{\bar{x}}$. We will show that $\qb$ can be transformed to a smooth metric $q$ on $\Mc$ of the same signature such that its value $q_{\bar{x}}$ at $\bar{x}$ coincides with $\gamma$.

It follows from the comment presented in the last paragraph of Section \ref{disc} that there exists $\gb\in GL(T_{\bar{x}}\Mc)$ of positive determinant such that
\begin{equation}
\gb^*\qb_{\bar{x}}=\gamma.
\label{gq-gamma}
\end{equation}

%the But obviously $\gb^*\qb_{\bar{x}}=(-\gb)^*\qb_{\bar{x}}$ which means that without loss of generality we can require the determinant of $\gb$ in \eqref{gq-gamma} to be positive. %In other words, if $GL^+(T_x\Mc)$ is the Lie group of all linear automorphisms of $T_x\Mc$ of positive determinant, then there exists an element $\gb$ of $GL^+(T_x\Mc)$ such that        

Let $(y^i)$ be a coordinate frame defined on a neighborhood $\Uc$ of $\bar{x}$ such that $y^i(\bar{x})=0$. We assume moreover that the set of values of the coordinates, contains the closed unit ball in $\R^{\dim\Mc}$ centered at $0$.

The equation \eqref{gq-gamma} expressed in the coordinates $(y^i)$ reads
\[
\qb_{\bar{x} ij}\gb^i{}_m\gb^j{}_n=\gamma_{mn}.
\]   
Note now that the matrix $(\gb^i{}_j)$ is an element of the group $GL^+(\dim\Mc,\R)$ of all  $\dim\Mc\times\dim\Mc$ real matrices of positive determinant. This group is (path) connected \cite{hall}---there exists a continuous curve $\xi:[0,1]\mapsto GL^+(\dim\Mc,\R)$, which starts at $(\gb^i{}_j)$ and ends at the identity $e$ of the group.  

%Appriopriately chosen points of the image of $\xi$ when connected by smooth curves form 

A piecewise smooth curve $\zeta$ in the group $GL^+(\dim\Mc,\R)$, which starts at $(\gb^i{}_j)$ and ends at $e$, can be obtained from $\xi$---it is possible to choose a finite number of points on the image of $\xi$, including $(\gb^i{}_j)$ and $e$, and connect each pair of consecutive points by a smooth curve. $\zeta$ is then composed of a finite number of smooth curves $\{\zeta_1,\zeta_2,\ldots,\zeta_N\}$ such that: \emi $\zeta_1$ starts at $(\gb^i{}_j)$, \emii $\zeta_I$, $1<I\leq N$, starts where $\zeta_{I-1}$ ends, and \emiii $\zeta_N$ ends at $e$. 

There exists a smooth non-decreasing function $f:[0,1]\mapsto[0,1]$ such that it is equal $0$ on the interval $[0,\eps]$, and is equal $1$ on the interval $[1-\eps,1]$ for some $\eps\in[0,1/2[$. Using this function one can reparameterize each $\zeta_I$ to obtain so called ``lazy'' curve \cite{baez-path}, that is, a smooth map
\[
[0,1]\ni t\mapsto \tilde{\zeta}_I(t)\in GL^+(\dim\Mc,\R),
\]
such that: \emi it is constant on $[0,\eps]$ and is constant on $[1-\eps,1]$, \emii $\tilde{\zeta}_I$ starts (ends) where $\zeta_I$ does and \emiii the images of both curves coincide.  
  
The composition of ``lazy'' curves is again a ``lazy'' curve \cite{baez-path}. Thus the composition of the ``lazy'' curves $\{\tilde{\zeta}_1,\tilde{\zeta}_2,\ldots,\tilde{\zeta}_N\}$ is ``lazy''. Thereby the composition is a smooth curve
\[
[0,1]\ni t\mapsto \tilde{\zeta}(t)\in GL^+(\dim\Mc,\R),
\] 
which starts at $(\gb^i{}_j)$ and ends at $e$. 

Let $\big(\tilde{\zeta}^i{}_m(t)\big)$ be the components of the matrix $\tilde{\zeta}(t)$ and let $r^2:\Uc\to \R$ be a smooth map defined as follows:
\[
x\mapsto r^2(x):=(y^1(x))^2+(y^2(x))^2\ldots+(y^{\dim \Mc}(x))^2.
\]  
It is easy to realize that the assignment
\[
\Mc\ni x\mapsto q_x=
\begin{cases}
\qb_{xij}\tilde{\zeta}^i{}_m(r^2(x))\tilde{\zeta}^j{}_n(r^2(x))\,dy^n\,dy^m&\text{for $x\in \Uc$ such that $r^2(x)\leq 1$,}\\
\qb_{x}& \text{otherwise,}
\end{cases}
\]
defines a smooth metric $q\in \Qc(\Mc)$ such that $q_{\bar{x}}=\gamma$.

%***************************************************
%***************************************************
%\bibliography{bibliography}{}
%\bibliographystyle{oko}

%***************************************************
%***************************************************

\end{document}